\newcommand\newtag[2]{#1\def\@currentlabel{#1}\label{#2}}
\begin{document}
\title{A Lazy Abstraction Algorithm for Markov Decision Processes}
\subtitle{Theory and Initial Evaluation}

\author{D\'aniel Szekeres\inst{1}\orcidID{0000-0002-2912-028X} \and
Krist{\'o}f Marussy\inst{1}\orcidID{0000-0002-9135-8256} \and
Istv\'an Majzik\inst{1}\orcidID{0000-0002-1184-2882}}
\authorrunning{D. Szekeres \and I. Majzik}

\institute{Department of Measurement and Information Systems\\ Budapest University of Technology and Economics\\
\email{\{szekeres, marussy, majzik\}@mit.bme.hu}}
\maketitle 
\begin{abstract}
Analysis of Markov Decision Processes (MDP) is often hindered by state space explosion. Abstraction is a well-established technique in model checking to mitigate this issue. This paper presents a novel lazy abstraction method for MDP analysis based on adaptive simulation graphs. Refinement is performed only when new parts of the state space are explored, which makes partial exploration techniques like Bounded Real-Time Dynamic Programming (BRTDP) retain more merged states. Therefore, we propose a combination of lazy abstraction and BRTDP. To evaluate the performance of our algorithm, we conduct initial experiments using the Quantitative Verification Benchmark Set.

\keywords{Lazy abstraction \and Markov Decision Processes \and Abstraction refinement \and Probabilistic model checking}
\end{abstract}

\newcommand{\distr}[1]{\mathbb{D}(#1)}
\newcommand{\assignments}{\mathcal{A}}
\newcommand{\commandset}{\mathcal{C}}
\newcommand{\varset}{\mathcal{V}}
\newcommand{\range}[1]{\mathit{R}(#1)}
\newcommand{\dirac}[1]{\delta_{#1}}
\newcommand{\Unknown}{\mathit{Unknown}}
\newcommand{\True}{\mathit{True}}
\newcommand{\False}{\mathit{False}}
\newcommand{\dcomm}[1]{d_{#1}}
\newcommand{\gcomm}[1]{g_{#1}}
\newcommand{\ARG}{ASG}
\newcommand{\PARG}{P\ARG{}}
\newcommand{\ARGfull}{Adaptive Simulation Graph}
\newcommand{\PARGfull}{Probabilistic \ARGfull}
\newcommand{\targ}{T_\mathit{\PARG{}}}
\newcommand{\coveraction}{cover}

\newcommand{\repeattheorem}[1]{%
  \begingroup
  \renewcommand{\thetheorem}{\ref{#1}}%
  \expandafter\expandafter\expandafter\theorem
  \csname reptheorem@#1\endcsname
  \endtheorem
  \endgroup
}

\NewEnviron{reptheorem}[1]{%
  \global\expandafter\xdef\csname reptheorem@#1\endcsname{%
    \unexpanded\expandafter{\BODY}%
  }%
  \expandafter\theorem\BODY\unskip\label{#1}\endtheorem
}

\section{Introduction}\label{sc:intro}
Ensuring the reliable operation of safety-critical systems, like railway interlocking systems and embedded controllers, is vital. Probabilistic model checking addresses this by offering an automated approach with formal mathematical guarantees for the analysis of quantitative properties, like reliability and availability \cite{KNP07a}. We focus on a fundamental task in probabilistic model checking: computing the worst-case probability of reaching an error state.

\emph{Markov Decision Processes (MDPs)} are discrete-time models able to describe both probabilistic and non-deterministic behavior, used in reliability and safety analysis for worst-case modeling of unknown factors. The analysis of other modeling formalisms, like Markov Automata or Probabilistic Timed Automata, can often be reduced to MDP analysis as well. 

State space explosion presents an obstacle for MDP model checking: as the number of components or variables increases, the state space may grow exponentially. Consequently, practical implementations face problems in representing the system in memory and the numerical solution methods also become intractable.

Abstraction aims to counteract this. Several abstraction-based techniques have been adapted to probabilistic systems, like CEGAR \cite{kattenbelt2009abstraction, hermanns2008probabilistic} and abstract interpretation \cite{CousotM12}. Partial state space exploration, like \emph{Bounded Real-Time Dynamic Programming (BRTDP)} \cite{Brazdil14Learning, kelmendi2018value} is another approach for counteracting it. As most existing MDP abstraction methods rely on computing the whole abstract model to choose a refinement, they do not lend themselves well to combination with partial state space exploration techniques. 

\emph{Lazy abstraction} \cite{henzinger2002lazy, mcmillan2006lazy} in contrast merges state-space exploration and refinement, making it a good candidate for this combination. However, no such method has been proposed for MDPs to our knowledge.

We adapt an existing lazy abstraction algorithm \cite{Toth20} to MDPs (\autoref{sc:lazy_mdp}). We combine it with BRTDP, benefiting from the synergy of lazy abstraction and partial state space exploration and enabling a trade-off between time and accuracy (\autoref{ssc:brtdp}). We evaluate the performance of the proposed algorithms using models from the Quantitative Verification Benchmark Set \cite{Hartmanns19QComp} (\autoref{sc:evaluation}). 
    \subsubsection{Related Work}\label{ssc:relwork}
\emph{Counterexample-Guided Abstraction Refinement (CEGAR)} \cite{Clarke00CEGAR} is a successful approach for abstraction-based model checking: it starts with a coarse abstraction and refines it based on abstract counterexamples.

Lazy abstraction, introduced in \cite{henzinger2002lazy}, improved CEGAR through \emph{on-demand refinement} during abstract state space exploration and varying precision from node to node in the state graph. An \emph{interpolant-based version} was proposed in \cite{mcmillan2006lazy}. This was adapted to timed automata \cite{herbreteau2013lazy}, introducing \emph{Adaptive Simulation Graphs (ASG)} as the abstract model. This allows \emph{earlier refinement}, cutting spurious paths before reaching a target, and a less expensive covering check. The ASG-based algorithm was adapted to explicit value abstraction of \emph{discrete variables} in \cite{Toth20}, which we, in turn, adapt to MDPs.

Different abstraction methods have been proposed for probabilistic systems. While some CEGAR-based methods employ MDPs as abstraction~\cite{dargenio02reduction, hermanns2008probabilistic, chadha2010counterexample}, others utilize stochastic games~\cite{parker2006game, wachter2010best, kattenbelt2009abstraction}, which we plan to incorporate in the future. Abstract interpretation has also been used for probabilistic systems~\cite{CousotM12, esparza2011probabilistic}. Some others include magnifying lens abstraction~\cite{de2007magnifying}, which explores the whole concrete state space but keeps only its subset in memory and assume-guarantee-style abstraction~\cite{komuravelli2012assume}, specialized for composite systems. To our knowledge, no lazy abstraction method has been proposed for probabilistic models yet.

The algorithm presented in this paper uses a symmetric representation constraint, resulting in an approach similar to \emph{bisimulation reduction} techniques \cite{dehnert13smt, kamaleson16finite}. The main difference is that until the whole \ARG{} is explored, only a limited version of ``bisimilarity'' holds which does not take the unexplored part of the state space into account, allowing coarser partitions on the already explored part. When combined with partial exploration, the algorithm can stop before exploring the full \ARG{}. \emph{Finite-horizon bisimulation minimization} \cite{kamaleson16finite} and \emph{incremental bisimulation abstraction-refinement} \cite{song2014incremental} are similar in that they employ a relaxed version of bisimulation. Both of them limit the bisimilarity to a fixed path length and compute exact quotients w.r.t the relaxed relation, while we base the relaxation on the currently explored state space and do not aim for computing the coarsest relation.

\emph{BRTDP} was introduced in \cite{mcmahan2005bounded} for Stochastic Shortest Paths, and \cite{Brazdil14Learning} applied it to \emph{general MDPs}. We combine it with our lazy abstraction algorithm.

\section{Background and Notations}\label{sc:background}

$\distr{A}$ is the set of probability distributions over the set $A$. For $d \in \distr{A}, a \in A$, $d(a)$ denotes the probability measure of $a$ according to $d$. $f\colon A \hookrightarrow B$ means $f$ is a \emph{partial function} from $A$ to $B$, and $Supp(f)$ is the set of values for which $f$ is defined. For $d \in \distr{A}$, $Supp(d) = \{a \in A |  d(a) > 0\}$. $\dirac{x}$ is a Dirac distribution: $\dirac{x}(x)=1, \forall y\neq x\colon \dirac{x}(y)=0$.

\subsection{Markov Decision Process (MDP)} 

MDPs are low-level mathematical models that describe both probabilistic and non-deterministic behavior in discrete time. 

\begin{definition}[MDP]
   An MDP is a tuple $M = (S, Act, T, s_0)$, where $S$ is the set of states, $Act$ is the set of actions, $T\colon S \times Act \times S \xrightarrow{} [0, 1]$ is a probabilistic transition function s.t. $\forall s \in S, a \in Act\colon \sum_{s' \in S} T(s, a, s') \in \{0, 1\}$ and $s_0 \in S $ is the initial state.
\end{definition}

 An action $a \in Act$ is \emph{enabled} in $s \in S$ if $\sum_{s' \in S} T(s, a, s')=1$. In this case, $T(s, a) \in \distr{S}$ denotes the next state distribution after taking $a$ in $s$, defined as $T(s, a)(s') = T(s, a, s')$. The intuitive behavior of an MDP is as follows: starting in $s_0$ an action $a$ is chosen non-deterministically from those enabled in the current state $s_i$ in each step, and the next state is sampled from $T(s_i, a)$. A \emph{trace} of an MDP is an alternating list of states and actions $s_0 \xrightarrow{a_1} s_1 \xrightarrow{a_2} s_2 \xrightarrow{a_3} \dots $ such that $\forall i\colon T(s_{i-1}, a_i, s_i)>0$. Fixing a strategy for resolving the non-determinism, the set of traces can be equipped with a probability measure: intuitively, the probability of a trace is the product of the probability of landing in each state of the trace after taking the action specified by the strategy in the previous state. For a detailed formal treatment, see e.g. \cite{KNP07a}.

Given an MDP of the system behavior and a set of target (error) states $E$, we want to compute (an upper approximation of) the probability of the set of traces involving a state in $E$ with non-determinism resolved by a maximizing strategy: $\mathbb{P}_{max}(\{ (s_0 \xrightarrow{a_1} s_1 \xrightarrow{a_2} s_2 \xrightarrow{a_3} \dots)\ |\ \exists i \in \mathbb{Z}^+\colon s_i \in E \})$. The result is the same if we make all target states absorbing, allowing us to restrict the analysis to finite traces.

\subsubsection{Symbolic MDPs}
Most real-life models are specified symbolically using state variables and operations on them. We assume that the MDP is given by a set of variables $\varset$ and a set of probabilistic guarded commands $\commandset$. Each $v \in \varset$ has a set $\range{v}$ of values it can take, and an initial value $v_0 \in \range{v}$. A \emph{valuation} over $\varset$ is a function $val\colon \varset \xrightarrow{} \bigcup_{v \in \varset} \range{v}$ s.t. $\forall {v \in \varset}\colon val(v) \in \range{v}$, and $\mathit{VAL}_\varset$ is the set of all valuations over $\varset$. The initial valuation of the model is a valuation $val_0$ s.t. $\forall v \in \varset\colon val_0(v) = v_0$. The state space of this MDP is a subset of $\mathit{VAL}_\varset$, and its initial state is $val_0$. 

Let $\mathcal{B}_\varset$ denote the set of Boolean expressions over $\varset$, $\mathcal{E}_\varset^v$ the set of expressions over $\varset$ that result in an element of $\range{v}$, and $\mathcal{E}_\varset=\bigcup_{v \in \varset} \mathcal{E}_\varset^v$. An \emph{assignment} is a function $a\colon \varset \xrightarrow{} \mathcal{E}_\varset$, such that $\forall {v \in \varset}\colon a(v) \in \mathcal{E}_\varset^v$. Let $\assignments_\varset$ denote the set of assignments for $\varset$. $eval(e, val)$ for $e \in \mathcal{E}_\varset$ and $val \in \mathit{VAL}_\varset$ is the constant resulting from replacing each $v \in \varset$ in $e$ with $val(v)$. $eval(a, val)$ for $a \in \assignments_\varset$ is a valuation $val'$ such that $\forall {v \in \varset}\colon val'(v)=eval(a(v), val)$. $eval(d, val)$ for $d \in \distr{\assignments_\varset}$ is the distribution $d' \in \distr{\mathit{VAL}_\varset}$ such that $d'(val')=\sum_{a \in \{a \in Supp(d)\ |\ eval(a, val)=val'\} } d(a)$.

A \emph{command} $c \in \commandset$ consists of a guard $\gcomm{c} \in \mathcal{B}_\varset$ and a result distribution over assignments $\dcomm{c} \in \distr{\assignments_\varset}$. $c$ is \emph{enabled by} $val$ iff $eval(\gcomm{c}, val)=\True$. Let $a^c_i \in Supp(\dcomm{c})$ denote the $i$th assignment of command $c$ for a fixed ordering. The enabled actions in each state $val$ of the represented MDP are the commands enabled by $val$, and taking the command $c$ results in the distribution $eval(\dcomm{c}, val)$. Widely used MDP description formats, like that of \textsc{PRISM}~\cite{KNP11} or the \textsc{JANI}~\cite{Budde17Jani} format can be mapped to this low-level description.

Our running example is given by the following variables and commands:
\begin{align*}
    & \varset = \{x, y\}, \range{x}=\range{y}=\mathbb{N}, x_0=y_0=0 \\
    & \mathbf{c_1}\colon [\mathit{true}] \ 0.8: (x':=x+1 \wedge y':=y), 0.2: (x':=x \wedge y':=y) \\
    & \mathbf{c_2}\colon [x==0]\ 1.0: (x':=1 \wedge y':=2) \\
    & \mathbf{c_3}\colon [x==2 \wedge y==2]\ 1.0: (x':=x \wedge y':=3)
\end{align*}
$c_1$ is enabled in every state, and it increments $x$ by $1$ with probability $0.8$. $c_2$ is enabled when $x=0$, and always sets $y$ to $2$ and $x$ to $1$. $c_3$ is enabled when $x$ is $2$ and $y$ is 2, and sets $y$ to $3$. \autoref{fig:mdp_example} shows this MDP.

\subsection{Lazy Abstraction} 
Abstraction-refinement methods mitigate state-space explosion by disregarding information present in the original \emph{concrete} model to create an \emph{abstract} model that is iteratively refined until a conclusion is reached. Lazy abstraction performs refinement on-the-fly and only on a subset of the state space.

For checking safety properties in the qualitative case, a conservative abstraction overapproximates the \emph{reachable state set}. In the probabilistic setting, the \emph{probability} of reaching a target state in the abstract model overapproximates that in the concrete one, which we will prove for the proposed algorithm. 

We build on the lazy abstraction method of \cite{Toth20} for non-probabilistic systems. It constructs an \emph{\ARGfull{} (\ARG{})} with nodes labeled by both a concrete and an abstract state: the concrete state represents all states in the abstract state regarding possible action sequences. The abstract state labels start very coarse and are refined as needed. \emph{Covering edges} indicate that action sequences starting from the coverer node encompass those starting from the covered node, eliminating the need to explore paths from the covered node. 

If an action is enabled in at least one concrete state described by the abstract label of a node, but not in the concrete label, the abstract label is \emph{strengthened} by removing states with the action enabled. This operation can trigger additional strengthenings. The algorithm terminates once all enabled actions in non-covered nodes have been explored. The abstract labels in the finished \ARG{} cover all reachable concrete states, and contain a target state only if one is reachable.
\subsubsection{Abstract domains}

Abstract states are described using an \emph{abstract domain}. For a set of concrete states $S$, an abstract domain $D=( \hat{S}, \preceq, \alpha, \gamma )$ consists of the abstract state set $\hat{S}$, a partial ordering $\preceq \ \subseteq \hat{S} \times \hat{S}$, an abstraction function $\alpha\colon 2^S \xrightarrow{} \hat{S}$ and a concretization function $\gamma\colon \hat{S} \xrightarrow{} 2^S$ satisfying
$\forall {A \in 2^S, \hat{a} \in \hat{S}}\colon \alpha(A) \preceq \hat{a} \iff A \subseteq \gamma(\hat{a})$. 
$\gamma$ lets us treat abstract states as sets of concrete states; 
we write ``$s \in \hat{s}$'' for $s \in \gamma(\hat{s})$ when $\gamma$ is clear from the context. $x \preceq y$ denotes $(x, y) \in \ \preceq$. $\hat{S}$ has two special elements: $\top$ and $\bot$ satisfying $\gamma(\top)=S, \gamma(\bot)=\{\}$.

Our lazy abstraction algorithms are domain agnostic, but need an abstract domain for $S=\mathit{VAL}_\varset$ with the following operations.

For $a \in \assignments_\varset$ and $\hat{s} \in \hat{S}$, \emph{abstract post operator} $eval(a,  \hat{s}) \in \hat{S}$ applies an assignment in the abstract state space: $eval(a,  \hat{s}) = \alpha(\{eval(a, s) | s \in \hat{s}\})$. For $b \in \mathcal{B}_\varset$, $eval(b, \hat{s}) \in \{\True, \False, \Unknown\}$ denotes evaluating $b$ in the abstract state space: $\True$ if $b$ evaluates to $\True$ for \emph{all} $s \in \hat{s}$, $\False$ if $b$ evaluates to $\False$ for \emph{all} $s \in \hat{s}$, otherwise $\Unknown$.

We also need a \emph{block} operation: for an abstract state $\hat{s} \in \hat{S}$, a Boolean expression $b \in \mathcal{B}_\varset$ and a concrete state $ s \in \hat{s}$ s.t. $eval(b, s)=\False$, $\hat{s}' = block(\hat{s}, b, s)$ is an abstract state s.t. $\hat{s}' \preceq \hat{s}, s \in \hat{s}', eval(b, \hat{s}')=\False$. Its goal is to give a new abstract state by removing at least those states from $\hat{s}$ which satisfy $b$ (potentially others as well) while keeping $s$. 

The abstract states must be representable as Boolean expressions: for each $\hat{s} \in \hat{S}$ a $b_{\hat{s}} \in \mathcal{B}_\varset$ must exist s.t. $\forall s \in S\colon eval(b_{\hat{s}}, s)=\True \iff s \in \hat{s}$. Relying on this, we will freely use abstract states in place of Boolean expressions. 

We will use the \emph{explicit value domain} $D_{expl}$ (abstract states correspond to tracking only a subset of $\varset$) as an example throughout the paper which we implemented in our prototype, along with predicate abstraction $D_{pred}$ (abstract states are Boolean predicates over $\varset$). A partial function $pval\colon \varset \hookrightarrow \bigcup_{v \in \varset} \range{v}$ s.t. $\forall v \in Supp(pval)\colon pval(v) \in \range{v}$ is called a \emph{partial valuation}, $\mathit{PVAL}_\varset$ denotes the set of all partial valuations over $\varset$. Description of these domains and their operations can be seen in \autoref{tab:domains}, assuming a concrete state set $\mathit{VAL}_\varset$. 

The lazy abstraction algorithm does not use the abstraction and concretization functions $\alpha$ and $\gamma$ and the abstract post operator $eval(a,  \hat{s})$ directly, only as arguments of a block operation (see later), so they need not be efficiently computable if the corresponding block operation can be implemented efficiently.

\begin{table}[h]
    \setlength{\tabcolsep}{6pt}
    \centering
    \begin{tabulary}{0.93\linewidth}{p{0.11\linewidth}CC}
    \toprule
          & \textbf{EXPL} & \textbf{PRED}  \\
    \midrule
    $\hat{S}$ & $\mathit{PVAL}_\varset \cup \{\bot_{expl}\}$ & $\mathcal{B}_\mathcal{V}$ \\\midrule
     $\preceq$ & $pval \preceq pval' \iff (Supp(pval') \subseteq Supp(pval) \wedge \forall{v \in Supp(pval')}\colon pval(v)=pval'(v) )$ & $b_1 \preceq b_2 \iff (b_1 \implies b_2)$ \\\midrule
    $\alpha$ & \(Supp(\alpha(A)) = \{ v \in \varset\ |\ \exists k \in \range{v}\colon \forall val \in A\colon val(v)=k \}\) and $\forall v \in Supp(\alpha(A))\colon \alpha(A)(v)=k \iff \forall val \in A\colon val(v)=k$ & $\bigvee_{\mathit{val}\in A}(\forall v \in \mathcal{V}\colon v=\mathit{val}(v))$ \\\midrule
    $\gamma$ & $\gamma(pval)=\{val \in \mathit{VAL}_\varset\ |\ \forall v \in Supp(pval)\colon val(v)=pval(v)\}$ & $\gamma(b)=\{val \in \mathit{VAL}_\varset\ |\ \mathit{val}\vdash b )\}$ \\\midrule
    $\top, \bot$ & $\top$ is the empty valuation, $\bot$ is a non-valuation element $\bot_{expl}$ representing contradiction & $\top=\True, \bot=\False$ \\\midrule
    Boolean representation & $b_{pval}=(\bigwedge_{v \in Supp(pval)} v=pval(v))$ & identity (already a Boolean expression) \\\midrule
   $eval(b, \hat{s})$  & Substituting the values in $Supp(\hat{{s}})$, and deciding the satisfiability of the result. & $\True$ if $\hat{s} \implies b$, $\False$ if $\hat{s} \implies \neg b$, $\Unknown$ if neither. \\\midrule
   $eval(a, \hat{s})$ & Substituting known variables into the result expressions. If this results in a constant, that is the result, else unknown. & strongest postcondition
   \\
    \bottomrule
    \end{tabulary}
    \caption{Properties and operations of the explicit value and predicate domains}
    \label{tab:domains}
\end{table}

\section{Lazy Abstraction for MDPs}\label{sc:lazy_mdp}

\begin{figure}[t]
    \centering
    \begin{subfigure}[t]{0.49\textwidth}
        \includegraphics[width=\textwidth]{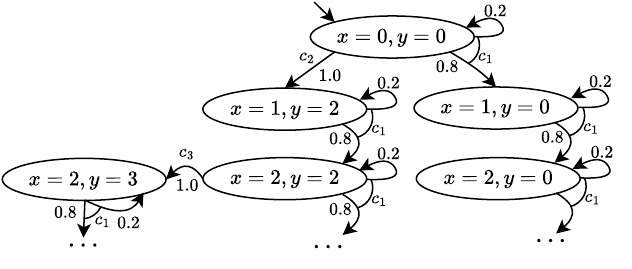}
        \caption{}
        \label{fig:mdp_example}
    \end{subfigure}
    \begin{subfigure}[t]{0.49\textwidth}
        \includegraphics[width=\textwidth]{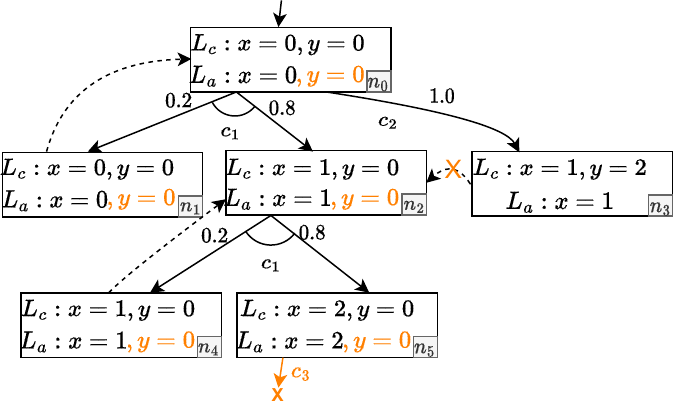}
        \caption{}
        \label{fig:asg_example}
    \end{subfigure}
    \caption{Our running example MDP (a) and an in-progress \PARG{} for it (b). The result of a refinement step is marked with orange}
\end{figure}

\todo{improve intuitive presentation and example explanations if possible, especially refinement strategy}
\todo{highlight the advantages of ASG-based lazy}

\begin{algorithm}
	\caption{\PARG{} construction}\label{alg:construction}
	\begin{algorithmic}[1]
        \State $N \gets \{n_0\}; E_T \gets \{\}; E_C \gets \{\}$
        ; $L_c(n_0) \gets s_0; L_a(n_0) \gets \top$
        ; waitlist $\gets \{n_0\}$
        \While{waitlist is not empty}
            \State $n \in$ waitlist
            ; waitlist $\gets$ waitlist$\setminus \{n\}$
            \If{$\exists\ n_c \neq n \in N: L_c(n) \in L_a(n_c) \wedge n_c \text{ not covered}$} 
                \State $E_C \gets E_C \cup \{ (n, n_c) \}$
                ; $Block(n, \neg L_a(n_c))$
            \Else{} \Comment{Expansion}
            \ForAll{$c \in \commandset$}
                \If{$eval(\gcomm{c}, L_c(n))=\True$} \Comment{$c$ is enabled in $L_c(n)$}
                    \If{$c$ is target command} mark $n$ as target \EndIf
                    \If{$eval(\gcomm{c}, L_a(n))=\Unknown$} \label{ln:must_strengthening}
                    \State Block($n$, $\neg g_c$) \Comment{$c$ can be disabled in $L_a$, so we refine}
                    \EndIf
                    \ForAll{$a_i \in Supp(\dcomm{c})$}
                        \State $N \gets N \cup \{n_{new}\}$;
                        $L_c(n_{new}) \gets eval(a, L_c(n))$;
                        $L_a(n_{new}) \gets \top$;
                        $\delta(n_{new}) \gets p^c_i$
                        waitlist $\gets$ waitlist $\cup\ n_{new}$ 
                    \EndFor
                    \State $E_T \gets E_T \cup (n, c, \delta)$ \label{ln:new_edge}
                \ElsIf{$eval(g(c), L_a(n))=\Unknown$} \label{ln:strengthen_for_command} 
                    \State Block($n$, $\gcomm{c}$) \Comment{$c$ can be enabled in $L_a$, but is not in $L_c$, so we refine}
                \EndIf
            \EndFor
            \EndIf
        \EndWhile
        \State \Return \PARG{} $(N, E_T, E_C, L_c, L_a)$
        \algstore{alg}
	\end{algorithmic}
\end{algorithm}

\begin{algorithm}
	\caption{Block(n, $\phi$)}\label{alg:block}
	\begin{algorithmic}[1]
        \algrestore{alg}
        \Require $eval(\phi, L_c(n))=\False$
        \State $L_a(n) \gets block(L_a(n), \phi, L_c(n))$
        \ForAll{$(n', n) \in E_C$} 
            \Comment{Check nodes $n'$ covered by $n$}
            \If{$L_c(n') \notin L_a(n)$} \Comment{Remove if new $L_a$ cannot cover}
            \State $E_C \gets E_C \setminus (n', n); waitlist \gets waitlist \cup\ n'$ \label{ln:remove_cover}
            \Else{} Block($n'$, $\neg L_a(n)$) \label{ln:strengthen_for_cover} 
            \Comment{Else refine covered node}
            \EndIf
        \EndFor
        \State Let $e=(n_{pre}, c, d) \in E_T\ s.t.\ n \in Supp(d)$
        \Comment{$n_{pre}$ denotes the parent of this node}
        \State Let $a^c_i$ s.t. $n^e_i=n$
        \Comment{$a^c_i$ is the assignment which resulted in the node $n$}
        \State Block($n_{pre}, \neg eval^{-1}(a^c_i, L_a(n))$) \label{ln:strengthen_parent}
        \Comment{Making sure that \ref{b2)} still holds for $L_a$}
	\end{algorithmic}
\end{algorithm}

Now we adapt the lazy algorithm to symbolic MDPs given by a variable set $\varset$ and a command set $\commandset_0$. Given a target formula $\phi \in \mathcal{E}_\varset$, the goal is to compute the maximal probability of reaching a state $s$ s.t. $eval(\phi, s)=\True$. 

We select abstract domain $(\hat{S}, \preceq, \alpha, \gamma)$. The set of commands is extended with a target command: $\commandset=\commandset_0 \cup \{(\phi, \dirac{id})$\}, where $id$ is the identity assignment. A node is a target if this command is enabled in it. This ensures that the finished \ARG{} 
contains a node labeled with a target state exactly if a target state is reachable in the concrete state space \cite{Toth20}. 

\subsubsection{Abstract model}
We use a probabilistic extension of the \ARG. A direct adaptation of the non-probabilistic lazy algorithm by switching to probabilistic actions would overapproximate the target probability with no control over the approximation. Therefore, we use a stricter, symmetric representation constraint for the relation between the concrete and abstract labels of a Probabilistic \ARG{} node. 

\begin{definition}[\PARG{}]
A \PARGfull{} is a tuple $(N, E_T, E_C, L_c, L_a)$, where $N$ is a set of nodes, $E_T \subseteq N \times \commandset \times \distr{N}$ is a set of transition ``edges'' from nodes to node distributions labeled with commands, $E_C \subseteq N \times N$ is a set of directed \emph{covering} edges, $L_c: N \xrightarrow{} \mathit{VAL}_\varset$ is the \emph{concrete labeling function}, $L_a: N \xrightarrow{} \hat{S}$ is the \emph{abstract labeling function}. 
\end{definition}

For an edge $e=(n, c, d) \in E_T$, $n^e_i$ is the $i$th element of $Supp(d)$ for a fixed ordering. A \PARG{} is \emph{well-labeled}, if it satisfies the constraints in \autoref{tab:constraints}. The main difference from the original \ARG{} is that in \ref{b2)}, we use a distribution of results instead of a single one, and \ref{a2)} requires the set of enabled actions to be exactly the same in all concrete states contained in the abstract label. 

In the original lazy algorithm, refinement is performed when an action \textit{disabled} in $L_c(n)$, but \textit{enabled} in some element of the abstract label $L_a(n)$, and we refine by blocking the guard of the action from $L_a(n)$. Here, we \textit{also} refine when an action is \textit{enabled} in $L_c(n)$ but \textit{disabled} in some element of $L_a(n)$ by blocking the \textit{negation} of the guard. We also need to adapt soundness to probabilities, see \autoref{thm:bi_soundness}. \ref{d)} is a technical constraint to make our proofs easier. 

\begin{table}[h]
    \centering
    \setlength{\tabcolsep}{6pt}
    \begin{tabulary}{\linewidth}{@{}JL@{}}
    \toprule
     \textbf{Constraint} & \textbf{Formalisation} \\
     \midrule
      \newtag{\textbf{A1})}{a1)} Abstract label contains the concrete label: & $\forall n \in N: L_c(n) \in L_a(n)$ \\\addlinespace
     
     \newtag{\textbf{A2)}}{a2)} Concrete label exactly represents the whole abstract label with respect to the enabled commands & $\forall n \in N: \forall c \in \commandset: eval(\gcomm{c}, L_c(n)) = eval(\gcomm{c}, L_a(n))$ \\\addlinespace
    
    \newtag{\textbf{B1)}}{b1)} The command of a transition edge is enabled in the concrete label of the source & $\forall (n, c, \cdot) \in E_T \colon{} \discretionary{}{}{} \mathit{eval}(\gcomm{c}, L_c(n))=\True$ \\\addlinespace
     
    \newtag{\textbf{B2)}}{b2)} For transition edges $e=(n, c, d) \in E_T$, the $i$th result node is consistent with the $i$th assignment:
        same probability, concrete label is the result of the assignment, abstract label overapproximates the result & $d(n^e_i)=\dcomm{c}(a^c_i)$ $L_c(n^e_i)=eval(a^c_i, L_c(n))$ $eval(a^c_i, L_a(n)) \preceq L_a(n^e_i)$ \\\addlinespace

    \newtag{\textbf{C1)}}{c1)} Abstract label of covering node contains the concrete label of covered node & $\forall (n, n') \in E_C\colon \discretionary{}{}{} L_c(n) \in L_a(n')$ \\\addlinespace
    
    \newtag{\textbf{C2)}}{c2)} Covering node is at least as abstract as the covered node & $\forall (n, n') \in E_C\colon \discretionary{}{}{} L_a(n) \preceq L_a(n')$ \\\addlinespace
    
    \newtag{\textbf{C3)}}{c3)} Covering node is not covered & $\forall (n, n') \in E_C\colon \discretionary{}{}{} \neg \exists (n', n'') \in E_C $ \\\addlinespace

    \newtag{\textbf{D1)}}{d)} At most one node labeled with a given concrete label can be non-covered & $\forall n, n' \in N\colon n \neq n' \wedge L_c(n)=L_c(n') \implies (\exists n''\colon (n, n'') \in E_C \vee (n',n'') \in E_C) $
     \\\bottomrule
    \end{tabulary}
    \caption{\PARG{} constraints}
    \label{tab:constraints}
\end{table}

\begin{example}
    \autoref{fig:asg_example} shows an example \PARG{} with black (the orange part is a refinement example explained later). The abstract label tracks only $x$ in all nodes (this could differ from node to node in general). $L_c$ is contained in $L_a$ for all nodes as the value of $x$ is the same in $L_c$ and $L_a$ (\ref{a1)}). 

    $n_0$ covers $n_1$, $n_2$ covers $n_4$ and $n_3$, satisfying \ref{c1)} and \ref{c2)}. $n_3$ could cover $n_4$ according to the labels, but it would violate \ref{c3)}. 

    This \PARG{} is unfinished, $n_4$ and $n_5$ are not expanded. $n_0$ is an example for the remaining constraints. \ref{a2)} is satisfied, as tracking $x$ in $L_a$ is enough to disable $c_3$, and both $c_1$ and $c_2$ are enabled in $L_c$ and everywhere in $L_a$. As the outgoing edges are labeled with $c_1$ and $c_2$, \ref{b1)} is satisfied. Let $e=(n_0, c_1, d)$ be the $c_1$ edge from $n_0$. The assignments in $c_1$ are $a^{c_1}_1=(x':=x+1 \wedge y':=y)$, paired with $n^e_1=n_2$ and $a^{c_1}_2=(x':=x \wedge y':=y)$ paired with $n^e_2=n_1$, which satisfies \ref{b2)}. E.g. for $a^{c_1}_1$: $eval(a^{c_1}_1, L_c(n_0))=eval((x':=x+1 \wedge y':=y), (x=0, y=0))=(x=1, y=0)$, $eval(a^{c_1}_1, L_a(n_0))=(x=1) \preceq (x=1)$ and $\dcomm{c_1}(a^{c_1}_1)=0.8=d(n_2)$.
\end{example} 

\subsubsection{Exploration} 
\autoref{alg:construction} shows \PARG{} construction. An initial node $n_0$ labeled $L_c(n_0)=val_0, L_a(n_0)=\top$ is extended to a well-labeled \PARG{} with each node either covered or expanded. 
\autoref{alg:block} shows blocking an expression from $L_a(n)$, used during refinement.

When removing a node $n$ from the waitlist, we check whether $\exists n_c \neq n \in N\colon L_c(n') \in L_a(n_c)$ s.t. $n_c$ is not covered.  If so, a covering edge $(n', n_c)$ is created and $L_a(n')$ is strengthened for \ref{c2)} to hold. Else, it is expanded.

If a node $n \in N$ is selected for expansion, we check for each $c \in \commandset$ whether $eval(\gcomm{c}, L_c(n))=\True$. If so, a new node $n'_i$ is created for each $a_i \in Supp(\dcomm{c})$ with $L_c(n'_i)=eval(a_i, L_c(n)), L_a(n'_i)=\top$, and a transition edge $(n, c, d_e)$ is created such that $d_e(n'_i)=\dcomm{c}(a_i)$ for $i=1\dots |Supp(\dcomm{c})|$. Because of \ref{a2)}, if the abstract label contains states where the transition is disabled, we remove them by blocking out the negated guard (Line \ref{ln:must_strengthening}). 

If $\mathit{eval(\gcomm{c}, L_c(n))=\False}$, we compute $eval(\gcomm{c}, L_a(n))$. If $\False$, we move on to the next command. If $\Unknown$ (cannot be $\True$, as $L_c \in L_a$), \ref{a2)} is violated, so $L_a(n)$ needs to be strengthened: a new abstract label is computed as $\hat{s}'=block(L_a(n), \gcomm{c}, L_c(n))$. Because of the contract of $block$, $eval(\gcomm{c}, \hat{s}')=\False$, so this command no longer causes a constraint violation (Line \ref{ln:strengthen_for_command}).

\subsubsection{Refinement} 
Refinement is interleaved with exploring the abstract state space. Whenever the $L_a(n)$ changes for some $n \in N$, the constraints may be violated. If constraint \ref{c1)} is violated, the problematic covering edge is removed from $E_C$. This makes $n$ non-covered, so we expand it later (Line \ref{ln:remove_cover}). 

If constraint \ref{c2)} is violated by covering edge $(n, n') \in E_C$, but constraint \ref{c1)} still holds, the current $L_a(n)$ must be replaced with $\hat{s}'$ such that $L_c(n) \in \hat{s}'$, $\hat{s}' \preceq L_a(n')$ and $\hat{s}' \preceq L_a(n)$ (referring to the current $L_a$). An appropriate $\hat{s}'$ is $block(L_a(n), \neg L_a(n), L_c(n))$ (Line \ref{ln:strengthen_for_cover}). 

Assume that \ref{b2)} is violated by an edge $e=(n, c, d)$. Because of how the \PARG{} is constructed, the concrete label and probability subconstraints of \ref{b2)} must still hold, but the abstract label part is violated by some node $n^e_i$: $L_a$ of $n^e_i$ no longer overapproximates applying $a^c_i$ (the assignment that led to its creation) to $L_a$ of its parent. The violation caused by this assignment is eliminated by changing $L_a(n)$ to $block(L_a(n), eval^{-1}(a^c_i, L_a(n^e_i), L_c(n))$ (Line \ref{ln:strengthen_parent}).

Strengthenings may create new violations, but all of them are eliminated after finite steps (if the concrete label can be finitely represented in the abstract domain), and we continue expanding non-covered nodes. Efficient implementations of the algorithm can employ sequence interpolation to strengthen the whole path up to the root at once \cite{Toth20}, which we do when using the predicate domain, as we observed that both simple weakest-precondition-based refinement and binary interpolation lead to predicates growing very fast.

\begin{example} 
\autoref{fig:asg_example} shows an example of refinement in orange. Expanding $n_5$, we realize $c_3$ is enabled in some states described by abstract label $x=2$, but not in the concrete label $x=2, y=0$. Thus, we strengthen $n_5$ by blocking the guard $x==2 \wedge y==2$, resulting in the abstract label $x=2, y=0$. 

This triggers another strengthening, as $L_a(n_5)$ no longer overapproximates applying $x'=x+1 \wedge y'=y$ to $L_a(n_2)$. $n_2$ is also strengthened, removing a cover edge as $L_c(n_3)$ is no longer contained in $L_a(n_2)$. 
Strengthening a covering node also strengthens the covered nodes if the covering remains (see $n_4$ and $n_1$).
\end{example}

\subsubsection{Numerical analysis} The finished \PARG{} can be treated as an MDP $(N, \mathcal{C} \cup \coveraction, \targ, n_0)$. Regarding $\targ$, for a non-covered node $n$, a command $c \in C$ is enabled if there is an edge $(n, c, d) \in E_T$ in the \PARG{}, and $\targ(n, c)=d$. The only action in covered nodes is $\coveraction$, which results in their covering node.

\begin{reptheorem}{thm:bi_soundness}
    The maximal/minimal probability of reaching a target node in the \PARG{} of an MDP $M$ is the same as in $M$.
\end{reptheorem}

Refer to the Appendix for proofs.

 Due to the symmetry of constraint \ref{a2)}, this abstraction can be considered similar to bisimulation-reduction, but not aiming for the coarsest bisimulation. Constructing the \PARG{} can be computationally cheaper than the coarsest bisimulation, but the larger state space may result in more expensive numerical computation. The advantages appear when the abstraction is combined with partial state space exploration, as most bisimulation reduction algorithms in the literature cannot be done on the fly. Comparing our method to bisimulation reduction in depth (both theoretically and empirically) is planned for future work. 

\subsection{Combining with BRTDP}\label{ssc:brtdp}

Bounded Real-Time Dynamic Programming (BRTDP) \cite{mcmahan2005bounded, Brazdil14Learning} approximates the value function of an MDP \emph{during state space exploration}. It maintains an upper and a lower bound ($U$ and $L$) by generating traces and updating the bounds for the encountered states. In each step, the optimal action is chosen according to the current $U$. The strategy for choosing a state from the result distribution is a parameter of the algorithm, for which we implemented the RANDOM and DIFF\_BASED trace generation strategies from \cite{Brazdil14Learning}.

Our lazy abstraction algorithm combines well with such methods. As refinement is performed during expansion, the abstract states in an in-progress \PARG{} are coarser than those in a finished \PARG{}. Thus, if BRTDP reaches the required threshold before constructing the full \PARG{}, the abstract labels remain coarser. Existing probabilistic CEGAR schemes like \cite{kattenbelt2009abstraction} cannot benefit from this, as they need lower and upper value bounds for all nodes for refinement, while BRTDP works best when only the value of the initial node is needed. This combination enables a controlled trade-off between time and accuracy.

The skeleton of the algorithm is the same as the BRTDP algorithm described in \cite{Brazdil14Learning} for general MDPs. The difference is that instead of generating traces from the final \PARG{}, we use the steps of building the \PARG{} for trace generation. As such, refinement is also possible during trace simulation, potentially removing covers used in previous simulations.

\begin{reptheorem}{thm:bi_brtdp_soundness}
    The maximal probability of reaching a target state is always between $L(n_0)$ and $U(n_0)$ if BRTDP is used with \PARG{} construction steps.
\end{reptheorem}

This theorem is non-trivial, as the traces are not generated for the \emph{finished} \PARG{}, but for an in-progress version where transient cover edges can exist which would not be present when finished. The main idea behind its proof is that if a value is propagated through a cover edge, then the value has been updated only based on traces consistent with the cover edge (as we would have already removed it if any trace inconsistent with it had been explored).
\section{Evaluation}\label{sc:evaluation}

\begin{figure}
    \centering
    \includegraphics[width=\linewidth]{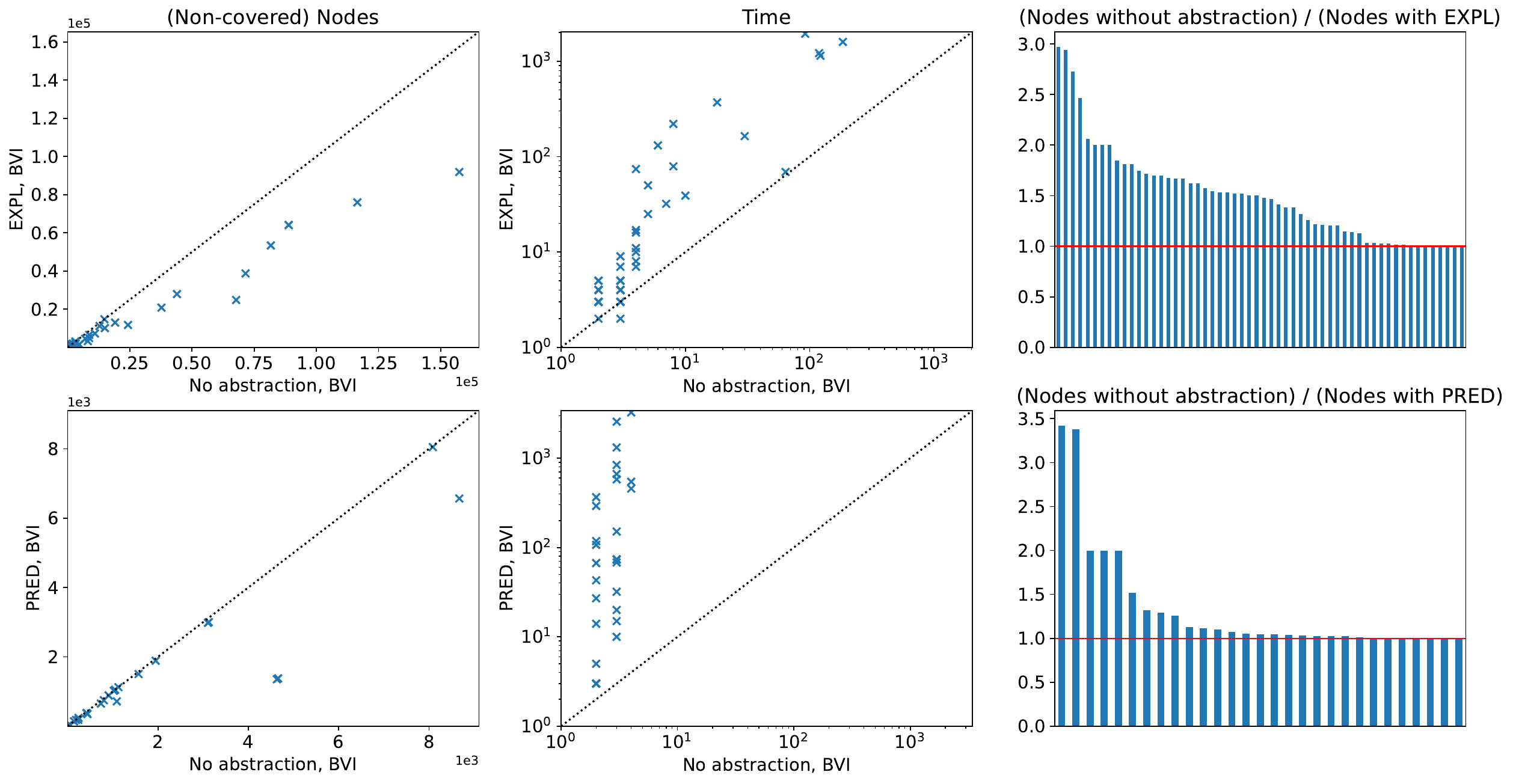}
    \caption{Comparison of standard and abstract BVI with the EXPL (top row) and PRED (bottom row) domain. Columns: 1\textsuperscript{st}: number of (non-covered) nodes, 2\textsuperscript{nd}: running time on log scale, 3\textsuperscript{rd}: ratio of the original and abstract state space size (red line marks 1, which would be no reduction). Only for inputs where the two algorithms compared in the plot terminated. 
    }
    \label{fig:bvi_plots}
\end{figure}

\noindent\textbf{Implementation} Our prototype\footnote{\url{https://github.com/szdan97/probabilistic-theta/tree/prob-proto}}
with the explicit value and predicate domains is implemented in the Theta model checker \cite{theta-fmcad2017}, taking JANI models\cite{Budde17Jani} as input. Only properties of the form $P_{max}(p\ U\ q)=?$ are supported where $p, q$ are Boolean constraints. Action result probabilities must be constant. The locations of JANI models are always tracked in the abstract states and covering can only occur between nodes with the same locations. To fairly assess the algorithms rather than the implementations, we implemented both Bounded Value Iteration (BVI) \cite{Baier17Ensuring} and BRTDP as MDP solution techniques in Theta both with and without lazy abstraction. We precompute almost sure reachability and avoidance to speed up the numerical solution if BVI is used. 

The main metrics of interest are state space size and running time. As the numerical computations mostly scale with the non-covered \PARG{} nodes, we are especially interested in the number of non-covered nodes. Our evaluation focuses on the following research questions:

\noindent\emph{\textbf{RQ1.} How does lazy abstraction affect the state space size and analysis time?}

\noindent\emph{\textbf{RQ2.} Does the combination of lazy abstraction with BRTDP lead to reduced abstract model size when converged? How does it affect the running time?}

\subsubsection{Setup} We used the 104 MDP model-property pairs from the Quantitative Verification Benchmark Set \cite{Hartmanns19QComp} compatible with our current implementation. The experiments were conducted using BenchExec\cite{Beyer19Reliable}, running them on the \emph{Komondor HPC} \footnote{https://hpc.kifu.hu/en} with AMD EPYC\textsuperscript{TM} 7763 CPUs, each run getting 8 CPU cores, 16GB RAM, and a 1-hour timeout. Convergence threshold 
was $10^{-6}$ (absolute) for all algorithms.

\subsubsection{Results and discussion}

\noindent\textbf{RQ1.}
\autoref{fig:bvi_plots} shows the BVI results.
Lazy abstraction often significantly reduced the state space: for example, a 3-fold reduction was possible for \textit{beb.3-4 [N=3, prop: GaveUp]} (from 4632 nodes to 1559 non-covered nodes) and \textit{csma.2-6 [prop: all\_before\_max]} (from 67741 to 24837) with EXPL, and for \textit{beb.3-4 [N=3, prop: GaveUp]} (from 4632 to 1354) with PRED. There are also inputs where the explicit domain could not reduce the state space size (e.g. \textit{blocksworld.5, cdrive.2}, but that is expected because of its low granularity.

Measuring the analysis time, it turned out that the overhead of more complex operations during exploration outweighed the benefits of numerical computations on a smaller state space. The overhead is apparent for EXPL, but it is much less severe than for PRED. 

Predicate abstraction was sometimes able to reduce the state space more than the explicit domain when it terminated before timeout, but the opposite was also present (related plots can be found in the appendix). The overhead of interpolating using an SMT solver was often too large, and so PRED often failed to terminate in time. 

We identified several abstraction-specific optimization possibilities for the implementation. For one, the interpolants returned by Z3 were often very large and had a redundant structure, which we could mitigate through structural simplification, improving both time and memory efficiency. Investigating alternative refinement methods and other solvers could also lead to better results.

Another opportunity for optimization is that like the non-probabilistic version \cite{Toth20}, we currently create multiple nodes with the same $L_C$ if a concrete state is reachable on multiple paths. As one of them always covers the others, only that is explored. However, according to our investigation, there are inputs where this led to multiple magnitudes of increase in the number of nodes during exploration. Merging such nodes instead would solve this issue, but refinement must be slightly changed as the path up to the root will no longer be unambiguous. We already had some preliminary measurements with promising results regarding this modification. 

\begin{figure}
    \centering
    \includegraphics[width=\linewidth]{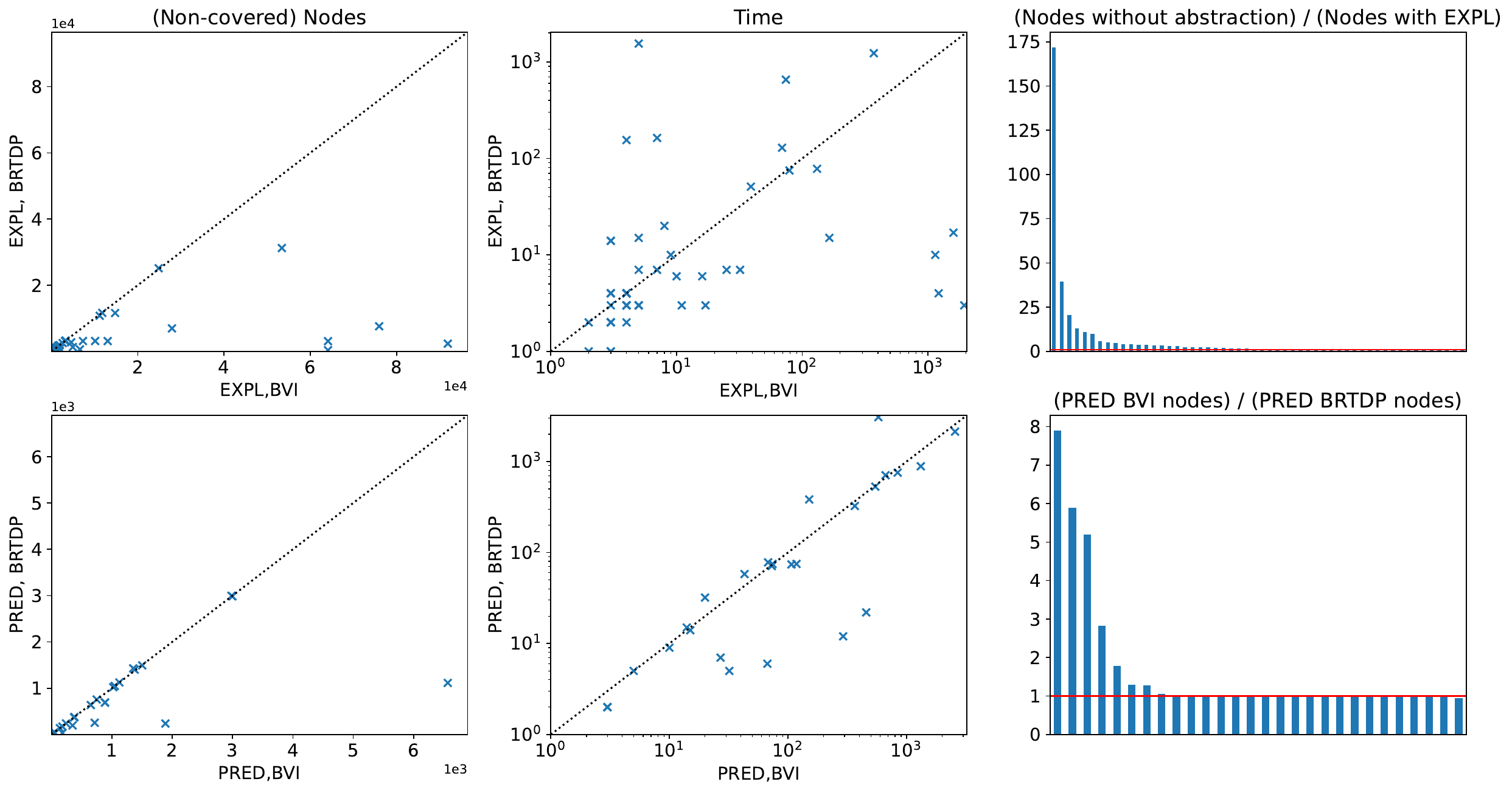}
    \caption{Comparison of abstract BVI and BRTDP. Same plot types as \protect\autoref{fig:bvi_plots}.}
    \label{fig:brtdp_vs_bvi_plots}
\end{figure}

\noindent\textbf{RQ2.} The simulation-based nature of BRTDP makes it harder to gauge the benefits of abstract BRTDP compared to standard BRTDP: there were inputs where abstract BRTDP converged with fewer nodes and where the concrete one did. (We relegated plots related to this comparison to the appendix.)

The benefits of abstract BRTDP compared to abstract BVI are much more apparent. The plots in \autoref{fig:brtdp_vs_bvi_plots} show this comparison (the results of the strategy leading to fewer nodes were used for each input-domain pair for BRTDP).

The highest relative state space size reduction was on \textit{zeroconf [N=20, k=2, !reset, prop: correct]} (170-fold from 64109 to 373 non-covered nodes) for EXPL and \textit{pnueli-zuck.3 [prop: live]} (from 1888 to 239 non-covered nodes) for PRED. There were inputs where no further reduction was achieved though (e.g. \textit{blocksworld.5, cdrive.2, rectangle-tireworld.5, ij.10} for both domains). When both BVI and BRTDP terminated, BRTDP was often able to do so in less time, especially with PRED (the results are much more two-sided for EXPL). 

\section{Conclusions}\label{sc:conclusions}

We proposed a lazy abstraction algorithm for symbolic MDPs and combined it with BRTDP. We provided numerical evaluation for different versions of the proposed algorithm using the Quantitative Verification Benchmark Set, comparing them to explicitly computing the concrete state space.

The initial experimental evaluation shows potential in the proposed algorithm, especially if reducing the state space is paramount for staying within the memory limits. As the time overhead introduced by more complex computations in the state space exploration often outweighed the gains from analyzing a smaller state space, we plan to explore possible improvements for this aspect. Further measurements using other benchmark sets and more parameterizations of the scalable models in the QVBS are also planned.

Additionally, we wish to explore \emph{fine-grained transitioning} between the strict representation version (see \ref{a2)}) and a more direct overapproximating adaptation by incorporating ideas from \emph{game-based abstraction refinement}, moving the algorithm closer to standard abstraction approaches instead of bisimulation reduction.

\bibliographystyle{splncs04}
\bibliography{bibliography}

\newpage
\appendix

\section{Proofs}

For a \PARG{} trace $n_0 \xrightarrow{act_1} n_1 \xrightarrow{act_2} n_2 \xrightarrow{act_3} \dots \xrightarrow{act_k} n_k$ such that $act_k \neq \coveraction$, its \emph{coverless representation} is constructed such that for each $i$ where $act_i=\coveraction$, $act_i$ and $n_i$ is dropped from the trace. The original trace can always be reconstructed from the coverless representation, as 

\begin{lemma}\label{lm:abstract_enabledness}
    For any well-labeled finished \PARG{} $P$ for the MDP $M$, let $s_0 \xrightarrow{c_1} s_1 \xrightarrow{c_2} s_2 \xrightarrow{c_3} \dots \xrightarrow{c_k} s_k$ be a trace of $M$. Then there must exist a trace of $P$ with coverless representation $n_0 \xrightarrow{act_1} n_1 \xrightarrow{act_2} n_2 \xrightarrow{act_3} \dots \xrightarrow{act_k} n_k$ such that for all $i=1..k: s_i \in L_a(n_i)$.
\end{lemma}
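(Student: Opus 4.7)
The natural approach is induction on the length $k$ of the concrete trace, maintaining as an invariant that after $i$ steps we have produced a \PARG{} trace whose coverless representation ends at some node $n_i$ with $s_i \in L_a(n_i)$.

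The base case $k = 0$ is immediate: the singleton \PARG{} trace $n_0$ works, since $s_0 = L_c(n_0)$ by the initialization in \autoref{alg:construction} and $L_c(n_0) \in L_a(n_0)$ by \ref{a1)}.

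For the inductive step, I would first deal with the possibility that $n_{k-1}$ is covered. Its (unique) cover target $m$ is non-covered by \ref{c3)}, and \ref{c2)} together with the Galois correspondence of the abstract domain gives $L_a(n_{k-1}) \preceq L_a(m)$, hence $s_{k-1} \in L_a(m)$. The corresponding $\coveraction$ step is prepended to the \PARG{} trace but dropped from its coverless representation, so the invariant is preserved after the transfer. I am then left with a non-covered node $n'$ (either $n_{k-1}$ itself or its cover target $m$). Because $P$ is finished, $n'$ has been expanded. The critical observation is that $c_k$ must be enabled in $L_c(n')$: since $c_k$ is enabled in the concrete $s_{k-1}$ and $s_{k-1} \in L_a(n')$, we have $eval(\gcomm{c_k}, L_a(n')) \neq \False$; by \ref{a2)} the abstract and concrete enabledness coincide, which together with \ref{a1)} forces $eval(\gcomm{c_k}, L_c(n')) = \True$. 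Hence a transition edge $e = (n', c_k, d) \in E_T$ exists. The concrete step uses some specific assignment $a^{c_k}_j \in Supp(\dcomm{c_k})$, and I would pick $n_k = n^e_j$. By \ref{b2)}, $d(n_k) = \dcomm{c_k}(a^{c_k}_j) > 0$, so the extended \PARG{} trace is valid, and $eval(a^{c_k}_j, L_a(n')) \preceq L_a(n_k)$ yields $s_k = eval(a^{c_k}_j, s_{k-1}) \in L_a(n_k)$, closing the induction.

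The main obstacle I expect is the symmetric use of \ref{a2)}. In the non-probabilistic setting it would suffice to know that the concrete label lies in the abstract one, but here the inductive hypothesis only gives that the concrete trace lies \emph{somewhere} inside $L_a$ at each step. Constraint \ref{a2)} is exactly the bridge that promotes ``enabled at some point of $L_a(n')$'' to ``enabled in $L_c(n')$'', which is what produces the outgoing edge needed for the extension. The secondary bookkeeping subtlety is to transfer the invariant cleanly across cover edges so that the resulting \PARG{} trace's coverless representation lines up with the required indexing $n_0 \to \dots \to n_k$.
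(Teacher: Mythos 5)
Your proposal is correct and follows essentially the same route as the paper's own proof: induction on the trace length, using \ref{a1)} for the base case, \ref{c2)} (with \ref{c3)}) to transfer the invariant across cover edges, \ref{a2)} to promote abstract enabledness to enabledness of $\gcomm{c_k}$ in the concrete label of the non-covered node, and the abstract-label subconstraint of \ref{b2)} to land in a node whose abstract label contains $s_k$. The only cosmetic difference is that you spell out the expandedness of non-covered nodes in a finished \PARG{} and the role of \ref{c3)} explicitly, which the paper leaves implicit.
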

\begin{proof}
    For any non-covered node $n$ and concrete state $s$, if $s \in L_a(n)$, then all commands enabled in $s$ are enabled in $n$ because of \ref{a2)} and \ref{b2)} (the concrete label has at least those commands enabled that are enabled in any state described by the abstract label, and the actions enabled in a non-covered node are the commands enabled in its concrete label). Because of \ref{c2)}, if there is a covering edge $(n, n')$ and $s \in L_a(n)$, the $s \in L_a(n')$ is also true. Because of the abstract label subconstraint of \ref{b2)}, if $s \in L_a(n)$ and choosing the assignment in $n$ leads to $n'$, then $eval(a, s) \in L_a(n')$. 

    Based on these observations, the theorem can be proven by induction on the trace length. It holds for the 0-length trace $s_0$, as $s_0 = L_c(n_0) \in L_a(n_0)$. Then for an $M$ trace with length $i+1$, we know that the length $i$ prefix has a corresponding $P$ trace ending in $n_i$. If this is not a covered node, then $c_{i+1}$ is enabled in it as (from the induction assumption) $s_i \in L_a(n_i)$, then we choose any assignment of $c_{i+1}$ that leads to $s_{i+1}$, and choose the same assignment in $P$ leading to a node $n_{i+1}$. As we observed, because of the abstract label subconstraint of \ref{b2)}, $s_{i+1} \in L_a(n_{i+q})$. If $n_i$ is covered, we take the $\coveraction$ action to $n'$, where $c_{i+1}$ is enabled, as $s_i \in L_a(n')$ because of \ref{c2)}. We take $c_{i+1}$ and an appropriate assignment to $n_{i+1}$, and $s_{i+1} \in n_{i+1}$ is true for the same reason as in the other case. This concludes our induction proof.
\end{proof}

A consequence of this is that if a node $n$ is covered by $n'$ in a finished \PARG{}, then for all traces from $L_c(n)$, there exists a trace from $L_c(n')$ with the same commands and assignments chosen.

\begin{lemma}\label{thm:upper_soundness}
    The maximal probability of reaching a target node in a \PARG{} for an MDP $M$ is at least as high as the maximal probability of reaching a target state in $M$.
\end{lemma}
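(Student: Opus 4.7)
The plan is to lift any (near-)optimal $M$-strategy to a \PARG{}-strategy that matches or exceeds its target-reaching probability, via a step-by-step coupling in the spirit of \autoref{lm:abstract_enabledness}. Fix an $\varepsilon$-optimal strategy $\sigma_M$ for $M$ and define a \PARG{}-strategy $\sigma_P$ that carries along a \emph{shadow} $M$-state $s_t$: set $s_0 := val_0 = L_c(n_0)$, and at each node $n_t$ along a sampled \PARG{}-trace let $\sigma_P$ take $\coveraction$ if $n_t$ is covered (leaving $s_t$ fixed) and the command $c := \sigma_M(s_t)$ otherwise. When an edge $(n_t, c, d) \in E_T$ fires and the sampled successor is $n_t^e_i$, update $s_{t+1} := eval(a^c_i, s_t)$.

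The central invariant I would prove by induction on $t$ is $s_t \in L_a(n_t)$. The base case is \ref{a1)}. For a cover step $(n_t, n_{t+1}) \in E_C$, \ref{c2)} gives $L_a(n_t) \preceq L_a(n_{t+1})$, hence $s_t \in L_a(n_{t+1})$. For a command step at a non-covered $n_t$, $c := \sigma_M(s_t)$ is enabled at $s_t$, so \ref{a2)} forces $eval(\gcomm{c}, L_a(n_t)) = \True$ (the $\Unknown$ alternative would require $eval(\gcomm{c}, L_c(n_t)) = \Unknown$, impossible for a concrete state; $\False$ contradicts $s_t \in L_a(n_t)$ satisfying the guard), so $c$ is enabled in $L_c(n_t)$ and an edge $(n_t, c, d) \in E_T$ exists. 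The successor clause $s_{t+1} = eval(a^c_i, s_t) \in L_a(n_t^e_i)$ then follows from the soundness of the abstract post (every $eval(a^c_i, s)$ with $s \in L_a(n_t)$ lies in $eval(a^c_i, L_a(n_t))$) combined with \ref{b2)}.

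Next I would match the probabilities: at a non-covered $n_t$, \ref{b2)} gives $d(n_t^e_i) = \dcomm{c}(a^c_i)$, so sampling a \PARG{}-successor under $\sigma_P$ induces exactly the same distribution on assignments (and hence on shadow updates) as sampling under $\sigma_M$ in $M$ from $s_t$. Cover steps are deterministic and leave the shadow fixed, so the $M$-projection of the coupled process is distributed as an $M$-trace under $\sigma_M$ starting at $val_0$. Finally, a node $n_t$ is a \PARG{}-target iff the target command $(\phi, \dirac{id})$ is enabled at $L_c(n_t)$; if $s_t$ satisfies $\phi$, then the same enabledness argument (applied to the guard $\phi$ and using $s_t \in L_a(n_t)$) yields $eval(\phi, L_c(n_t)) = \True$, so $n_t$ is a \PARG{}-target. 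Thus whenever the shadow trace reaches an $M$-target, the \PARG{}-trace reaches a \PARG{}-target at the same step, giving $\mathbb{P}^{\sigma_P}_{\PARG{}}(\text{reach target}) \geq \mathbb{P}^{\sigma_M}_{M}(\text{reach target})$; letting $\varepsilon \to 0$ delivers the claim.

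The main obstacle is the bookkeeping at cover steps: $L_c(n_t)$ can change discontinuously along a cover edge while the shadow $s_t$ stays fixed, so the inductive invariant must be phrased in terms of $L_a$ rather than $L_c$ (using \ref{c2)} for covers, \ref{b2)} for commands). The transfer of enabledness and target-ness back from the shadow to $L_c(n_t)$ at non-covered nodes rests squarely on the symmetric representation constraint \ref{a2)}, which is exactly why the asymmetric overapproximation used in the non-probabilistic \ARG{} is strengthened here.
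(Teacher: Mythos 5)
Your proof is correct and takes essentially the same route as the paper's: it lifts the (near-)optimal concrete strategy to the \PARG{} by tracking the concretized state along the abstract trace (the paper's ``concretization'', your ``shadow''), uses \ref{a1)}, \ref{a2)}, \ref{b2)} and \ref{c2)} for validity of the lifted strategy and the step-wise invariant, and notes that spurious abstract targets can only increase the probability. Two minor remarks: your coupling phrasing quietly absorbs the paper's explicit bookkeeping about several abstract traces sharing one concretization (the $abstr(\tau)$ summation), and your $\varepsilon$-optimal formulation avoids assuming a maximizing memoryless strategy exists; also, cover edges make the shadow stutter, so the target is reached up to stuttering (possibly one deterministic cover step later, via \ref{c1)} and \ref{a2)} at the coverer) rather than literally ``at the same step'', which does not affect the reachability probability.
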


\begin{proof}
Fix a maximizing memoryless strategy $\sigma$ for the MDP $M$ (we can restrict ourselves to memoryless strategies as an optimal memoryless strategy for an MDP is also an optimal general strategy if the optimization goal is a reachability probability). This induces a Discrete-Time Markov Chain $M^{\sigma}$, where the probability of reaching a target state from the initial state is the maximal probability of reaching a target in $M$. We will construct a strategy $\hat{\sigma}: N* \xrightarrow{} \commandset \cup \{\coveraction\} $ for the \PARG{} $P$ that results in at least as high probability for reaching a target node as $\sigma$. Note that $\hat{\sigma}$ is \emph{not} a memoryless strategy.

The aim of $\hat{\sigma}$ will be to copy the traces of $M^\sigma$. The non-trivial part of this is that because of covers, multiple states of $M^\sigma$ can correspond to the same node in the \PARG{} -- that is why $\hat{\sigma}$ is not a memoryless strategy but uses the whole trace to select an action. 

$\hat{\sigma}$ will obviously always choose $\coveraction$ if the trace ends in a covered node, as there is no other enabled action. Let $n_0 \xrightarrow{c_1} n_1 \xrightarrow{c_2} n_2 \xrightarrow{c_3} \dots \xrightarrow{c_n} c_n$ be a coverless representation of a \PARG{} trace $\hat{\tau}$. Because of \ref{b2)}, there is a one-to-one correspondence between the assignments of $c_i$ and the result nodes of the transition edge. Let $a_i$ be the assignment corresponding to $n_i$ in this trace.

We have two cases:
\begin{enumerate}
    \item A concrete trace $\tau=s_0 \xrightarrow{c_1} s_1 \xrightarrow{c_2} s_2$ can be constructed from $\hat{\tau}$ by starting in $s_0$, choosing $c_i$ and applying $a_i$ to $s_{i-1}$ in the $i$th step. This way, we can compute which concrete state we would actually be in if this assignment sequence happened in the original state space. We will call $\tau$ the concretization of $\hat{\tau}$. This is possible only if $c_i$ is enabled in $c_{i-1}$, which need not always be true -- that will be the second case. Let $\hat{\sigma}(\hat{\tau})$ be the same command as $\sigma(\text{last}(\tau))$.
    \item The trace is not concretizable. In this case, $\hat{\sigma}(\hat{\tau})$ can be anything, as such traces will never be generated by $\hat{\sigma}$ if it is constructed according to the previous case for concretizable traces, so it cannot change the induced MDP $P^{\hat{\sigma}}$.
\end{enumerate}

This is a valid strategy, as \autoref{lm:abstract_enabledness} ensures the selected command is enabled in the \PARG{} node. The fact that the choice for non-concretizable traces does not matter can be seen by induction: 
\begin{itemize}
    \item The only possible 1-length trace $n_1$ is trivially concretizable, the commands enabled in $n_0$ are exactly those enabled in $L_c(n_0)=s_0$.
    \item Let $\tau$ be the concretization of a concretizable trace $\hat{\tau}$. At the end of $\hat{\tau}$, $\hat{\sigma}$ chooses a command that is enabled at the end of $\tau$, resulting in a trace one step longer that is still concretizable.
\end{itemize}

Although there is always a unique assignment that results in entering a node in the \PARG{}, this is not true in general in the original MDP (e.g. in the state $x=1$, the assignments $x':=x+1$ and $x':=2$ both lead to the state $x=2$ -- if these are the assignments of the same command, we do not know which one was chosen), so there are multiple $P^{\hat{\sigma}}$ traces with the same concretization. 

Assume for a moment that concretization creates a one-to-one correspondence between abstract and concrete traces, so the aforementioned problem is not present. In this case, the probability subconstraint of \ref{b2)} would ensure that the probability of $\hat{\tau}$ according to the strategy $\hat{\sigma}$ is the same as the probability of its concretization $\tau$ according to $\sigma$. 

This can be generalized to the case when there are multiple abstract traces $\hat{\tau}_1, \hat{\tau}_2, \dots \hat{\tau}_k$ with the same concretization $\tau$: in this case, $\mathbb{P}_{\sigma}(\tau)=\sum_{i=1}^k \mathbb{P}_{\hat{\sigma}}(\hat{\tau}_i)$. 

Proof sketch for proving this by induction: if at the first command of a trace there are two assignments $a_1$ and $a_2$ in a command leading to the same concrete state $s'$ when applied to $s$, then the probability of going from $s$ to $s'$ is $\mathbb{P}(a_1)+\mathbb{P}(a_2)$, and the probability of the whole trace is $\mathbb{P}(a_1)+\mathbb{P}(a_2)$ times the probability of the suffix $\tau'$; in the \PARG{}, these two assignments result in different nodes, but the abstract traces corresponding to the suffix are available from both nodes (or there covering nodes, which only multiplies the probability of the suffix by $1.0$), so assuming the sum of the probability of these abstract traces is $\mathbb{P}(\tau')$, then the probability of the whole trace is $\mathbb{P}(a_1)\mathbb{P}(\tau')+\mathbb{P}(a_2)\mathbb{P}(\tau')=(\mathbb{P}(a_1)+\mathbb{P}(a_2))\mathbb{P}(\tau')$, which is the original probability.

Let $abstr(\tau)$ denote the set of abstract traces that have the concretization $\tau$, and $\mathbb{P}(abstr(\tau))$ is the sum of their probabilities. From the previous statement, we have $\mathbb{P}(\tau) = \mathbb{P}(abstr(\tau))$  Let $concr(\hat{\tau})$ be the concretization of $\hat{\tau}$. Observe that concretization is deterministic, so for any $\tau_1, \tau_2$, $abstr(\tau_1) \cap abstr(\tau_2) = \emptyset$. Let $\hat{T}$ be the set of $P^{\hat{\sigma}}$ traces that lead to target nodes, and $\hat{T}_c=(\hat{\tau} \in \hat{T} | concr(\hat{\tau}) \text{ is a target trace in $M^\sigma$} )$. $\hat{T}_c$ need not be the whole $\hat{T}$, as target nodes can cover non-target nodes, let $\hat{T}_s$ denote the "spurious" target traces $\hat{T}\setminus \hat{T}_c$. Let $T$ denote the set of target traces of $M^\sigma$. Observe that $\hat{T}_c = \bigcup_{\tau \in T} abstr(\tau) $. Now we can prove that the target probability on $P^{\hat{\sigma}}$ is at least as high as in $M^\sigma$:
\begin{align*}
    \mathbb{P}(\hat{T})&=\mathbb{P}(\hat{T}_c)+\mathbb{P}(\hat{T}_s)=(\sum_{\hat{\tau}\in\hat{T}_c} \mathbb{P}(\hat{\tau}))+\mathbb{P}(\hat{T}_s)=\\
    &=(\sum_{\tau\in T} \mathbb{P}(abstr(\tau)))+\mathbb{P}(\hat{T}_s)=(\sum_{\tau\in T} \mathbb{P}(\tau))+\mathbb{P}(\hat{T}_s)=\mathbb{P}(T)+\mathbb{P}(\hat{T}_s)
\end{align*}

As $\mathbb{P}(\hat{T}_s)$ is non-negative, and $\mathbb{P}(T)$ is exactly the target probability in $M^\sigma$, we have proven that the target probability in $\mathbb{P}^{\hat{\sigma}}$ is at least as high as in $M^\sigma$. $\hat{\sigma}$ is not necessarily a maximizing strategy, so the maximal probability in $P$ can be even higher. $\sigma$ was chosen to be maximizing, so this proves the theorem. 

The probability of reaching a target state in $M^\sigma$ is the sum of the probabilities of all traces reaching a target state. For each of these traces, we have a ``copy'' in the MDP induced by $\hat{\sigma}$ for \PARG{}, \emph{which has the same probability}, as the same commands are chosen throughout the trace and the probability of choosing a given assignment in a step of the trace is the same in the \PARG{} as in $M^\sigma$. There are other traces as well in the \PARG{} with strategy $\hat{\sigma}$ that end in target nodes: non-concretizable traces because of covering nodes enabling commands that are not enabled in the covered node and traces that end in target nodes that cover non-target nodes.

As $\hat{\sigma}$ already results in at least as high target reachability probability as an optimal strategy on $M$, and it might not even be optimal, this shows that the maximal probability in the \PARG{} must be at least as high as in $M$.

\end{proof}

\begin{lemma}\label{thm:lower_soundness}
The maximal probability of reaching a target node in a \PARG{} for an MDP $M$ is at most as high as the maximal probability of reaching a target state in $M$.
\end{lemma}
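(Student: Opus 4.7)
\textbf{Proof proposal for \autoref{thm:lower_soundness}.}
The plan is to dualize the construction in the proof of \autoref{thm:upper_soundness}: given an optimal memoryless strategy $\hat{\sigma}$ on the \PARG{} MDP $P = (N, \commandset \cup \{\coveraction\}, \targ, n_0)$, I will construct a (possibly history-dependent) strategy $\sigma$ on $M$ whose target-reaching probability is at least that of $\hat{\sigma}$ on $P$. Since $\hat{\sigma}$ is optimal, this forces $\max_P \mathbb{P}(\text{reach target}) \leq \max_M \mathbb{P}(\text{reach target})$.

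The central device is a \emph{coupled process} on pairs $(s, n) \in \mathit{VAL}_\varset \times N$, started at $(s_0, n_0)$, with the maintained invariant $s \in L_a(n)$ (equivalently $s \in \gamma(L_a(n))$). The transition rule is: if $n$ is covered with cover $n'$, deterministically move to $(s, n')$; otherwise let $c = \hat{\sigma}(h)$ where $h$ is the \PARG{}-history so far, sample an assignment $a^c_i$ from $\dcomm{c}$, and move to $(eval(a^c_i, s), n^e_i)$. I need to check three things about this coupled process. First, $c$ is actually enabled in $s$: the invariant gives $s \in L_a(n)$, constraint \ref{a2)} gives $eval(\gcomm{c}, L_c(n)) = eval(\gcomm{c}, L_a(n))$, and $\hat{\sigma}$ only selects commands enabled at $n$ (i.e.\ with $eval(\gcomm{c}, L_c(n)) = \True$ by \ref{b1)}), so $\gcomm{c}$ holds on all of $L_a(n)$ and in particular at $s$. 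Second, the invariant is preserved: on cover steps it follows from \ref{c2)}, and on transition steps from the abstract-label part of \ref{b2)} which yields $eval(a^c_i, s) \in eval(a^c_i, L_a(n)) \subseteq L_a(n^e_i)$. Third, the $M$-marginal of a transition step matches the $M$-dynamics: the probability of moving from $s$ to $s'$ under $c$ in $M$ is $\sum_{a^c_i: eval(a^c_i, s) = s'} \dcomm{c}(a^c_i)$, and summing the coupled process's branching probabilities $d(n^e_i) = \dcomm{c}(a^c_i)$ over the same $i$ gives the same number by the probability part of \ref{b2)}.

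Having built the coupled process, $\sigma$ is defined by projection: at any reachable $M$-history $s_0, c_1, s_1, \ldots, s_k$, let $\sigma$ play the command that the coupled process plays, marginalizing over any latent randomness used to pick assignments when several lead to the same concrete state. (Because reachability values for MDPs are unchanged by internal randomization, this projected randomized strategy has the same target probability as some deterministic history-dependent strategy, so the minor issue of assignment-ambiguity is harmless.) The $M$-marginal of the coupled process is exactly the distribution over $M$-trajectories induced by $\sigma$, so the probability that the coupled process eventually reaches a pair $(s, n)$ with $n$ a target node equals a lower bound on the probability that $\sigma$ reaches a target state in $M$, provided the following key implication holds: whenever $n$ is a target node, $s$ is a target state of $M$. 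This is where the symmetric constraint \ref{a2)} applied to the target command $(\phi, \dirac{id})$ is essential: $n$ being a target means this command is enabled at $L_c(n)$, so by \ref{a2)} it is enabled at every state of $L_a(n)$, and since $s \in L_a(n)$ by the invariant we obtain $eval(\phi, s) = \True$.

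Combining the pieces, the coupled process reaches a target $(s, n)$ exactly when $\hat{\sigma}$'s trace on $P$ reaches a target node (since the $n$-marginal is precisely the $P$-process under $\hat{\sigma}$), and every such pair has $s$ satisfying $\phi$, so $\mathbb{P}^M_\sigma(\text{reach }\phi) \geq \mathbb{P}^P_{\hat{\sigma}}(\text{reach target})$. Choosing $\hat{\sigma}$ optimal on $P$ concludes the proof, and together with \autoref{thm:upper_soundness} yields the equality claimed in \autoref{thm:bi_soundness}. The main obstacle I expect is formalizing the projection from the coupled process to a genuine $M$-strategy when the $M$-history does not uniquely determine the assignment taken; I intend to handle this by appealing to the standard fact that randomized and deterministic history-dependent strategies attain the same reachability value, rather than by constructing an explicit deterministic $\sigma$.
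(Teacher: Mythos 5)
Your proposal is correct, and it reaches the paper's conclusion by a noticeably different device at the crucial point. The paper's own (sketched) proof also mimics the \PARG{} strategy in the concrete world, using \ref{a2)} for enabledness transfer and an ``inverted'' \autoref{lm:abstract_enabledness}; but it resolves the assignment-ambiguity problem (two assignments of one command leading to the same concrete state, whose corresponding \PARG{} nodes may be treated differently by $\hat{\sigma}$) by constructing an auxiliary MDP $M'$ whose states carry the last applied assignment, proving $M'$ bisimilar to $M$, and then defining a \emph{deterministic} history-dependent strategy $\sigma(\tau)=\hat{\sigma}(\mathit{last}(\mathit{abstr}(\tau)))$, which is well defined because the $M'$-history determines the \PARG{} trace. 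You instead stay on $M$ itself and run a coupled process on pairs $(s,n)$ with invariant $s \in L_a(n)$ (preserved via \ref{c2)} on cover steps and the abstract-label part of \ref{b2)} on transition steps, with \ref{b1)} and \ref{a2)} giving enabledness of the chosen command at $s$ and the probability part of \ref{b2)} giving the matching of branching probabilities), and then project to a \emph{randomized} history-dependent strategy on $M$, invoking the standard fact that randomization does not increase the reachability value. Both routes are sound: the paper's pays the cost of building $M'$ and verifying the bisimulation but yields an explicit deterministic strategy; yours avoids the auxiliary model at the cost of that (standard, but worth citing) appeal to randomized strategies, and of being careful that cover steps are mere stuttering in the $M$-projection and that the conditional next-state law given the history and chosen command is exactly $T(s,c)$ independently of the latent node, so the projection really is a valid $M$-strategy. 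Your use of \ref{a2)} for the target command to conclude that a target node forces the coupled concrete state to satisfy $\phi$ matches the role that the symmetric constraint plays in the paper, and the resulting inequality $\mathbb{P}^{P}_{\hat{\sigma}}(\text{reach target}) \le \mathbb{P}^{M}_{\sigma}(\text{reach target})$ with $\hat{\sigma}$ optimal gives exactly the claimed bound; combined with \autoref{thm:upper_soundness} this yields \autoref{thm:bi_soundness} as in the paper.
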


\begin{proof}

\emph{Sketch:
}
The proof is done similarly to that of \autoref{thm:upper_soundness}, just the other way around. The main difference is that strategies on $M$ have less information in the trace than strategies in the \PARG{}, as nodes in the \PARG{} are basically labeled with which assignment was the last. This leads to a problem when two assignments of a command can lead to the same state in $M$, as a strategy on the \PARG{} may choose a different action in them (e.g. they can be covered by different nodes). 

Because of this, we construct a different MDP $M'$ from the symbolic description of $M$, where the states also contain the assignment that was applied last time. This MDP is bisimilar to $M$, the bisimulation relation being based on forgetting the last assignment -- neither the ``targetness'' nor the enabled commands and their result distribution depends on the last assignment if the current state is known, so it is easy to prove that this is a bisimulation. As such, the maximal target probability on $M'$ is the same as on $M$.  

For any maximizing memoryless strategy $\hat{\sigma}$ on the \PARG{}, we can construct a non-memoryless strategy $\sigma$ on $M'$ such that $\sigma(\tau)=\hat{\sigma}(last(abstr(\tau)))$, which is now unique because of the assignment labels. The strategy is valid, which can be proven using an ``inverted'' \autoref{lm:abstract_enabledness} based on \ref{a2)}. This results in at least as high a probability for reaching a target state as in the \PARG{}. 

\end{proof}

\repeattheorem{thm:bi_soundness}
\begin{proof}
The theorem is a result of combining \autoref{thm:upper_soundness} and \ref{thm:lower_soundness}: the probability in the \PARG{} is both a lower and an upper approximation of the probability in the MDP, so it must be the same.
\end{proof}

\begin{lemma}\label{thm:upper_brtdp_soundness}
        The maximal probability of reaching a target state is always below $U(n_0)$ if BRTDP is used with \PARG{} construction steps.
\end{lemma}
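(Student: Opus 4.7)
The plan is to prove a strengthened invariant by induction on the sequence of primitive steps executed during BRTDP (trace simulation, node expansion, cover additions/removals, label refinements, and Bellman backups), from which the lemma follows immediately. Specifically, I would maintain: at every moment during the run, every node $n$ currently in the in-progress \PARG{} satisfies $U(n) \geq \sup_{s \in \gamma(L_a(n))} V^M(s)$, where $V^M(s)$ denotes the maximum target reachability probability from concrete state $s$ in the original MDP $M$. Since \ref{a1)} gives $L_c(n_0) = val_0 \in \gamma(L_a(n_0))$, specializing the invariant to $n_0$ yields the lemma. The base case is trivial: every freshly created node has $U(n) = 1$, which trivially dominates any $V^M(s) \in [0,1]$.

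For the inductive step I would case-split on the primitive operations. Strengthening $L_a(n)$ via $block$ replaces $L_a(n)$ by some $L_a(n)' \preceq L_a(n)$, which can only shrink $\gamma(L_a(n))$ while $U(n)$ stays fixed, so the right-hand side can only decrease. At the $\coveraction$ backup of a covered node $n$ with coverer $n_c$, \ref{c2)} gives $\gamma(L_a(n)) \subseteq \gamma(L_a(n_c))$, so $\sup_{s \in \gamma(L_a(n))} V^M(s) \leq \sup_{s \in \gamma(L_a(n_c))} V^M(s) \leq U(n_c) = U(n)$ by the induction hypothesis at $n_c$. For a Bellman backup at a non-covered expanded node $n$, the critical observation is that \ref{a2)} makes the commands enabled throughout $\gamma(L_a(n))$ coincide with those enabled at $L_c(n)$, which is exactly the set of outgoing transition edges from $n$; the probability subconstraint of \ref{b2)} identifies the abstract transition probabilities with the concrete ones, and the abstract-label subconstraint yields $eval(a^c_i, s) \in \gamma(L_a(n^e_i))$ for every $s \in \gamma(L_a(n))$. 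Consequently, the concrete Bellman equation on $V^M$ is dominated term by term by the abstract Bellman update, so $U(n) = \max_c \sum_i \dcomm{c}(a^c_i) U(n^e_i)$ remains an upper bound on $\sup_{s \in \gamma(L_a(n))} V^M(s)$. Target nodes are handled by \ref{a2)} applied to the target command: every $s \in \gamma(L_a(n))$ is a genuine target with $V^M(s) = 1 = U(n)$.

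The main obstacle is the bookkeeping around the cascading refinements in \autoref{alg:block}: a single $Block$ call can recursively strengthen $L_a$ on ancestors, strengthen labels of descendants covered by the refined node, and drop cover edges whose covered concrete label is no longer contained. I would handle this by decomposing the cascade into a finite sequence of the individual operations treated above: each strengthening only shrinks some $\gamma(L_a(\cdot))$, and each cover removal leaves $U$ and $L_a$ of the formerly-covered node untouched, so the invariant is preserved one micro-step at a time, regardless of stale $U$ values that were set while the cover structure was different. This structural viewpoint subsumes the author's operational hint about ``updates through cover edges being consistent with the cover'': it is unnecessary to track which traces were explored, as it suffices that at the moment of each backup the local structural constraints \ref{a2)}, \ref{b2)}, \ref{c2)} are satisfied, which they are by construction.
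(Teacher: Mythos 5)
Your proposal is correct, but it takes a genuinely different route from the paper's proof. The paper works with an assignment-labeled MDP $M'$ (bisimilar to $M$, states of the form $s \oplus a$) and maintains the weaker invariant $U(n) \geq V(L_c(n) \oplus a_n)$, i.e.\ a bound only at the \emph{concrete label}; the delicate step there is the covered-node backup, which the paper justifies by arguing that any behavior of the covered node not yet matched by the coverer passes through an unexplored node whose upper bound is still $1.0$, so copying $U(n_c)$ is safe under maximization. You instead strengthen the invariant to $U(n) \geq \sup_{s \in \gamma(L_a(n))} V^M(s)$, a bound on the \emph{true} value function of $M$ over the whole abstract label. This buys you two simplifications: the auxiliary MDP $M'$ is no longer needed at all (it is really only essential for the lower-bound direction), and the cover case collapses to monotonicity of $\gamma$ together with \ref{c2)} plus the induction hypothesis at the coverer --- the paper's reasoning about transient covers and which traces have been explored becomes unnecessary, exactly as you observe, because $V^M$ already accounts for unexplored behavior. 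Your Bellman-backup case is also sound: \ref{a2)} makes the enabled command sets coincide across $\gamma(L_a(n))$ and match the outgoing edges (expansion is all-at-once), and the abstract-label subconstraint of \ref{b2)} together with monotonicity of $\gamma$ (which follows from the Galois condition) puts $eval(a^c_i,s)$ into $\gamma(L_a(n^e_i))$. Two small points to make explicit if you write this up: backups must be assumed to occur only at ``stable'' points where the \autoref{alg:block} cascade has finished, so that \ref{a2)}, \ref{b2)}, \ref{c2)} actually hold for the nodes and edges used (the paper makes the same tacit assumption), and nodes initialized with $U(n)=0$ (non-target absorbing nodes) need \ref{a2)} applied to the guards and the target command to conclude $V^M(s)=0$ for all $s \in \gamma(L_a(n))$, which your term-by-term argument covers once stated. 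As a side benefit, the symmetric strengthening $L(n) \leq \inf_{s \in \gamma(L_a(n))} V^M(s)$ would streamline the lower-bound lemma in the same way.
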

\begin{proof}
    \emph{Sketch:}
    Let $M'$ be an assignment-labeled version of $M$ similarly to the proof of \autoref{thm:lower_soundness}. For an element $s$ and an assignment $a$, the $M'$ state $s \oplus a$ corresponds to the $M$ state $s$ with the latest assignment being $a$. The initial state is a special $s_0 \oplus \bot$ element without an assignment label.
    For all already existing \PARG{} nodes $n$, $U(n) \geq V(L_c(n) \oplus a_n)$, where $V$ is the real value function of $M'$, $a_n$ is the assignment corresponding to $n$ according to constraint \ref{b2)}, and $L_c(n) \oplus a_n$ is a state of $M'$. This can be proven by induction. This is of course true when the node is created, as it is initialized to $1.0$ unless it is a non-target absorbing node, in which case $U(n)=V(L_c(n))=0.0$. Whenever $U(n)$ is updated in later steps, we have two cases:

    \begin{enumerate}
        \item $n$ is non-covered, so it is updated according to a transition edge. In this case, the update corresponds to an update on $M'$ according to an upper approximation of the value function $U'$, which is set to $U'(s)=U(m)$ if a node exists with $L_c(m)\oplus a_m$ and $1.0$ otherwise. Because of the induction hypothesis, this is a valid upper bound for the value function, which means that computing a Bellman update using it is a valid
        \item $n$ is covered by $n'$. We do not yet know whether the currently covering node will remain covering until the finished \PARG{}, so we cannot say that all action sequences possible from $L_c(n) \oplus a_n$ are possible from $L_c(n')\oplus a_(n')$. However, as the covering edge still exists, we know that those action sequences that are possible from $L_c(n)\oplus a_n$ but not from $L_c(n')\oplus a_{n'}$ have not been explored yet: along the way of these sequences, there exists a non-covered non-expanded node. The value of this node is currently overapproximated by $1.0$. So for all action sequences possible from $L_c(n)\oplus a_n$, they are either also possible from $L_c(n')\oplus a_{n'}$, or there is a prefix of it that is possible from it, and ends in a node whose value approximation is set to $1.0$. Because of this (and the fact that we are aiming to \emph{maximize} the value), the current value approximation of $n'$ must be higher than the value of $L_c(n)\oplus a_n$, so copying it to $n$ does not violate the induction assumption.
    \end{enumerate}

    As $M'$ is bisimilar to $M$, $V(s_0)=V(s_0 \oplus \bot)$, so $U(n_0)$ is also an upper bound for $s_0$.
    
\end{proof}

\begin{lemma}\label{thm:lower_brtdp_soundness}
        The maximal probability of reaching a target state is always above $L(n_0)$ if BRTDP is used with \PARG{} construction steps.
\end{lemma}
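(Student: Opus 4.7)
The plan is to prove a strengthened invariant: at every moment during a BRTDP run interleaved with \PARG{} construction, for every node $n$ currently in the \PARG{} and every concrete state $s \in L_a(n)$ one has $L(n) \le V(s)$, where $V$ denotes the maximal target-reachability value function of $M$. Given this invariant, the lemma is immediate: by constraint \ref{a1)}, $s_0 = L_c(n_0) \in L_a(n_0)$, so $L(n_0) \le V(s_0)$, which is the maximum reaching probability in $M$.

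I would prove the invariant by induction on the sequence of operations the combined algorithm performs: creation of nodes, addition or removal of covering edges, \texttt{Block} operations that strengthen $L_a$, and Bellman backups of $L$. Node creation sets $L(n)=0$ for a non-target node, trivially below $V$; a newly marked target node gets $L(n)=1$, and constraint \ref{a2)} forces every $s \in L_a(n)$ to satisfy the target guard as well, so $V(s)=1$ there. \texttt{Block} operations only shrink $L_a(n)$, so any invariant previously established over the larger set carries over to the smaller one. Addition or removal of cover edges does not touch $L$.

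The essential case is a Bellman backup at a non-covered node $n$, which assigns $L(n) \leftarrow \max_c \sum_i p^c_i\, L(n^e_i)$, with the maximum over commands $c$ having outgoing transition edges from $n$. Fix any $s \in L_a(n)$. Constraint \ref{a2)} implies that the commands enabled at $s$ coincide with those with outgoing edges from $n$, and the abstract-post part of constraint \ref{b2)} gives $eval(a^c_i, s) \in eval(a^c_i, L_a(n)) \subseteq L_a(n^e_i)$ for every assignment $a^c_i$. Applying the inductive hypothesis at each successor yields $L(n^e_i) \le V(eval(a^c_i, s))$, so summing with weights $p^c_i = \dcomm{c}(a^c_i)$ and maximising over $c$ produces
\[
\max_c \sum_i p^c_i\, L(n^e_i) \;\le\; \max_c \sum_i p^c_i\, V(eval(a^c_i, s)) \;=\; V(s),
\]
where the final equality is the Bellman optimality equation in $M$ (the implicit aggregation over assignments leading to the same concrete successor is harmless, since $V$ depends only on the successor state). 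For a value update at a covered node $n$ with coverer $n'$ the backup collapses to $L(n) \leftarrow L(n')$, and constraint \ref{c2)} gives $L_a(n) \preceq L_a(n')$, so every $s \in L_a(n)$ also lies in $L_a(n')$ and the invariant transfers directly from $n'$ to $n$.

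The main obstacle, and the reason for choosing this uniform-over-$L_a(n)$ invariant rather than the seemingly natural pointwise $L(n) \le V(L_c(n))$, is exactly the cover-edge case: two concrete labels $L_c(n)$ and $L_c(n')$ connected by a cover can have genuinely different $V$-values, so a pointwise bound at $n'$ would carry no information about $n$. Quantifying the invariant over all of $L_a(n')$ is precisely what lets inclusion \ref{c2)} transport the bound across a cover, while overapproximation \ref{b2)} is precisely what propagates the same uniformity through non-covered Bellman backups. This is the formal content of the remark following the theorem that values are only propagated through cover edges based on traces consistent with the cover: any trace exposing an inconsistency would have triggered a \texttt{Block} that either removes the cover or tightens the abstract labels before the invariant could be violated.
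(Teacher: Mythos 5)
Your proof is correct, but it takes a genuinely different route from the paper's. The paper's sketch mirrors its upper-bound argument (\autoref{thm:upper_brtdp_soundness}): it works in the assignment-labeled MDP $M'$ from the proof of \autoref{thm:lower_soundness}, maintains the pointwise invariant $L(n) \le V(L_c(n) \oplus a_n)$, and then has to do real work at cover edges, arguing that any action sequence available from the coverer but not from the covered node must pass through a not-yet-expanded node whose lower bound is still $0$, so the copied value cannot overshoot. You instead strengthen the invariant to uniformity over the whole abstract label, $L(n) \le \min_{s \in L_a(n)} V(s)$, stated directly for the value function of $M$, which removes the need for $M'$ and its bisimulation argument altogether and makes the cover case a one-liner via \ref{c2)} (together with \ref{c1)}/\ref{a1)} for membership); all the content is pushed into the Bellman-backup case, where \ref{b1)}, \ref{a2)} and the abstract-post part of \ref{b2)} give exactly the inequalities you state. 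Your uniform invariant is in effect the formalization of the paper's informal remark that values crossing a cover edge reflect only traces consistent with it, and it is arguably cleaner and more self-contained; what it relies on (and what you correctly note only in passing) is that the well-labeledness constraints \ref{a2)} and \ref{b2)} actually hold at the moments when $L$-updates are performed and that target nodes have their abstract labels strengthened with the target guard before their $L$-value of $1$ is used — timing assumptions that the paper's own sketch makes at the same level of informality, so this is not a gap relative to the paper's standard of rigor.
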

\begin{proof}
    \emph{Sketch:}
    This can be proven similarly to \autoref{thm:upper_brtdp_soundness}. The upper approximation $U$ is replaced in all statements with the lower approximation $L$. The non-covered case goes the same.

    If $n$ is covered by $n'$ in the finished \PARG{}, we know that all action sequences starting from $L_c(n')\oplus a_{n'}$ are also available from $L_c(n) \oplus a_n$. When an update happens during BRTDP, we do not know this yet. However, for all action sequences available from $L_c(n')\oplus a_{n'}$ and not available from $L_c(n) \oplus a_n$, as the covering edge is still there, there must be a state whose current lower approximation is $0$, as the corresponding \PARG{} node has not been expanded yet. So the current $L(n')$ must be lower than $V(L_c(n)\oplus a_n)$.
\end{proof}

\repeattheorem{thm:bi_brtdp_soundness}
\begin{proof}
    Consequence of combining \autoref{thm:upper_brtdp_soundness} and \autoref{thm:lower_brtdp_soundness}.
\end{proof}

\section{Further Experiment Results}

\autoref{fig:bvi_domain_plots} compares the PRED and EXPL domains when using BVI. Although EXPL always wins in running time, PRED can sometimes reduce the number of nodes further. If the overhead of refinement in the PRED domain can be mitigated making it terminate in time on more inputs, the better reduction capability might be observed on even more inputs.  

\autoref{fig:brtdp_plots} compares the results using BRTDP both for the baseline and abstraction-based analysis. No clear advantage of either of them can be observed. 

\begin{figure}[h]
    \centering
    \includegraphics[width=\linewidth]{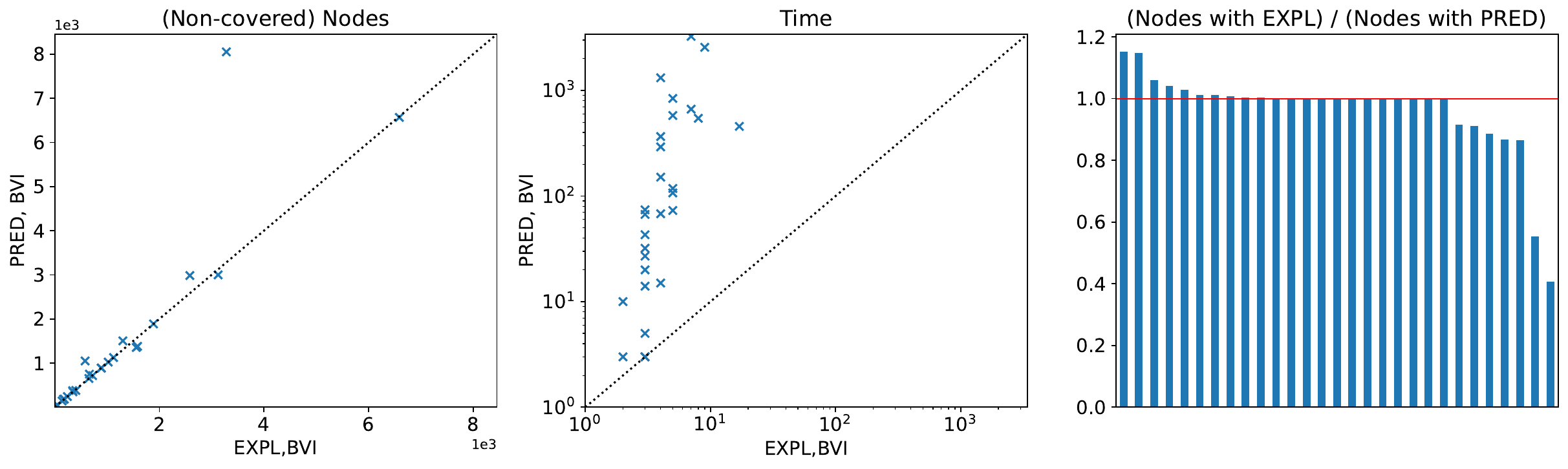}
    \caption{Comparison of EXPL and PRED BVI.}
    \label{fig:bvi_domain_plots}
\end{figure}
\begin{figure}[h]
    \centering
    \includegraphics[width=\linewidth]{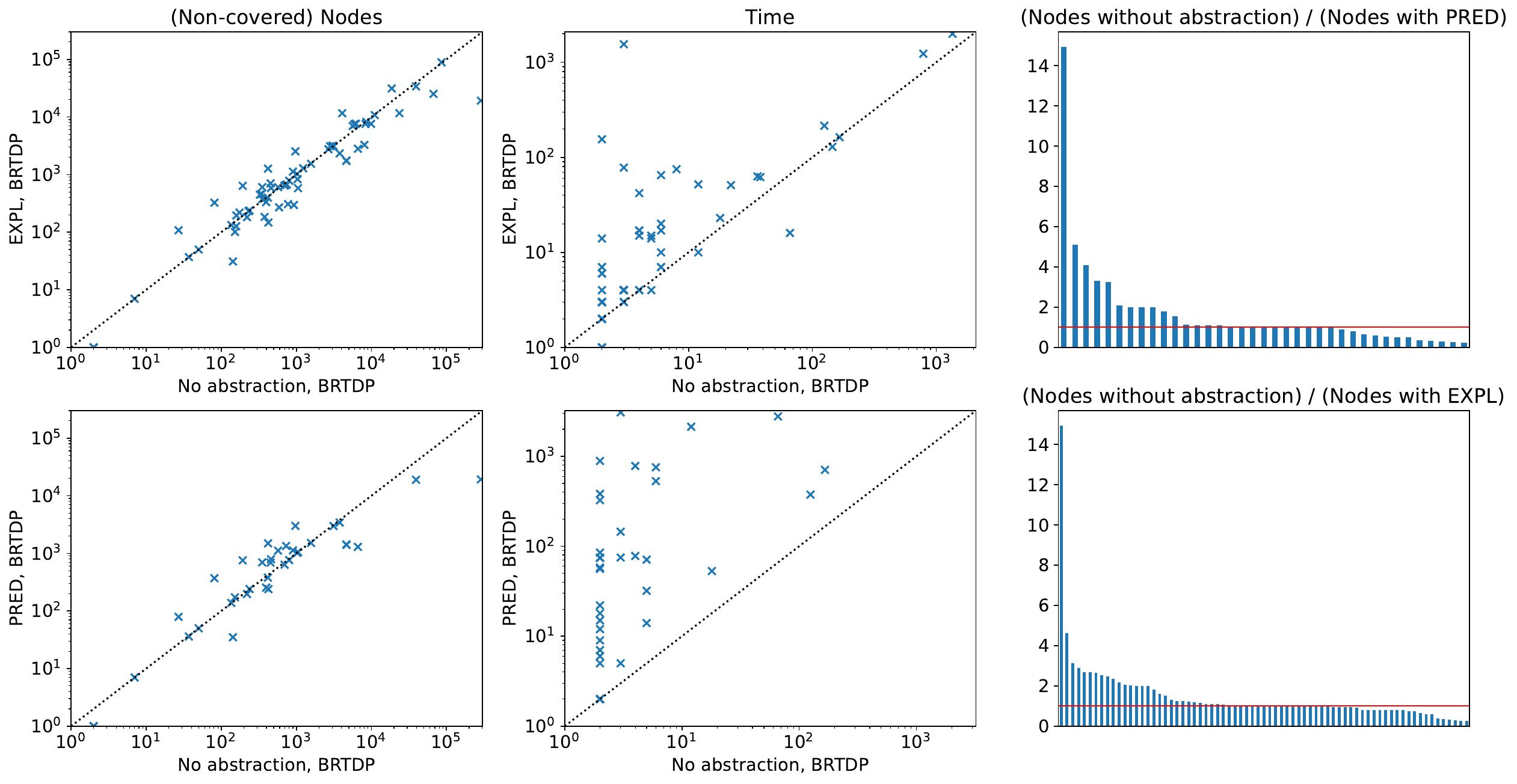}
    \caption{Comparison of abstract and standard BRTDP. Unlike in the other two figures, the node number comparison uses log scale, as linear scaling made it hard to see.}
    \label{fig:brtdp_plots}
\end{figure}

\end{document}